\documentclass[12pt,A4]{article}
\usepackage{latexsym}
\usepackage{amssymb}
\usepackage{amsmath}
\usepackage{mathrsfs}
\usepackage{setspace}
%\doublespacing
\onehalfspace
\usepackage{fancyhdr}

\usepackage{url} 
\usepackage{latexsym}
\usepackage{amssymb}
\usepackage{amsmath}
\usepackage{amsthm}
\usepackage{graphicx}
\usepackage{mathrsfs}
\usepackage{bigints}
\usepackage{listings}
\theoremstyle{definition}

\newtheorem{theorem}{Theorem}
\newtheorem{lemma}{Lemma}

\newtheorem{corollary}{Corollary}
\newtheorem{algorithm}{Algorithm}

\usepackage{color}

\title{Fine Entanglement and State Manipulation of Two Spin Coupled Qubits: A Lie Theoretic Overview}
\rhead{Lie Theoretic Qubit Manipulation}
\date{$16^{th}$ February 2015}
\author{\it Rod Vance}

\begin{document}

\begin{abstract}
By building on the work in Kuzmak \& Tkachuk, ``Preparation of quantum states of two spin-$\frac{1}{2}$ particles in the form of the Schmidt decomposition'', Physics Letters A, {\bf 378}, pp1469-1474, which outlined the control of the degree of entanglement within this system, it is proven that any $SU(4)$ state manipulation operator can be realised for this system using a sequence of pulsed magnetic fields in either two linearly independent directions if the gyromagnetic ratios are unequal or three directions for equal gyromagnetic ratios. To achieve this goal, an elementary Lie theoretic proof of the fact that the group of transformations generated by finite products of exponentials of a set of Lie algebra vectors is equal to the Lie group generated by the smallest Lie algebra containing those vectors is rewritten into an explicit algorithm. A numerical example as well as the proof of the algorithm's effectiveness is given.
\end{abstract}

\maketitle

\section{Introduction: Physical System Overview}
\label{Introduction}

The physical system considered here is the four dimensional quantum state space of two, two dimensional qubits. Furthermore, in this paper, the two qubits have an unremovable coupling between them, defined by a term proportional to $\sigma_x\,\otimes\,\sigma_x +\sigma_y\,\otimes\,\sigma_y+\sigma_z\,\otimes\,\sigma_z$  as an interaction term in the two-particle system's Hamiltonian. For concreteness I shall consider a lone donor $^{31}P$ atom (the naturally-occuring isotope) in a lattice of silicon as in \cite{KuzmakTkachuk}. This donor atom's nucleus has a spin $\frac{1}{2}$ and at crogenic temperatures the lone  $^{31}P$ in the $Si$ lattice can trap an electron. In this cryogenic, electron-trapped configuration, the  $^{31}P$ donor-electron pair is a coupled-spin, two qubit system. With the quantum state space defined by $\mathcal{H}_n\otimes\mathcal{H}_e$, where $\mathcal{H}_n,\,\mathcal{H}_e$ are, respectively, the separate two-dimensional spin state spaces for the free donor and electron, the Hamiltonian, Schr{\" o}dinger Equation and full coupled spin ket for the coupled system (with the ground state set to energy $0$) in the presence of a magnetic field are:

\begin{equation}
\label{BasicHamiltonian}
\begin{array}{lcl}
\hat{H}&=&\sum\limits_{j\in\{x,\,y,\,z\}}\left(B_j\,\left(\gamma_n\,\sigma_j\otimes\mathrm{id} -\gamma_e\,\mathrm{id}\otimes\sigma_j\right)-\frac{\kappa}{2}\,\sigma_j \otimes \sigma_j\right)\\
\mathrm{d}_t\,\psi&=&e^{i\,\hat{H}\,t}\,\psi;\quad\psi\stackrel{def}{=}\left(\begin{array}{cccc}z_1&z_2&z_3&z_4\end{array}\right)^T\\
|\psi\rangle &\stackrel{def}{=}& z_1\,|\psi_{+1}^n\rangle\otimes|\psi_{+1}^e\rangle +z_2\,|\psi_{+1}^n\rangle\otimes|\psi_{-1}^e\rangle+z_3\,|\psi_{-1}^n\rangle\otimes|\psi_{+1}^e\rangle+z_4\,|\psi_{-1}^n\rangle\otimes|\psi_{-1}^e\rangle
\end{array}
\end{equation}

\noindent where $\sigma_j$ are the $2\times 2$ Pauli spin matrices, $B_j$ are the Cartesian components of the magnetic induction and the subscripts $\pm1$ stand for the spin up/spin down kets in the separate nucleus ($n$) and electron ($e$) spin states. $\gamma_n,\,\gamma_e$ are the nucleus's and electron's gyromagnetic ratio, respectively; these are set by the fundamental physics of these particles and cannot be controlled by the experimenter. Their rough values in SI units, as given in \cite{KuzmakTkachuk}, are $\gamma_n=17.23\,{\rm MHz\,T^{-1}}$ and $\gamma_e=27.97\,{\rm GHz\,T^{-1}}$. Likewise, the value of $\kappa\approx 58.765{\rm MHz}$ is set by the geometry of the overlap between the electon's orbital and the nucleus: it is always present and cannot be experimentally changed. Lastly, there is currently no technology through which an experimenter could impart a different magnetic field separately to the nucleus and the electron; indeed, significant magnetic field variations over the {\aa}ngstrom scales of the system imply $10{\rm keV}$ electromagnetic radiation, which would definitely disturb the system in ways not described by the simple model of \eqref{BasicHamiltonian}. If we make the following definitions of the quaternion units and members of $\mathfrak{su}(4)$:

\begin{equation}
\label{QuaternionDefinitions}
\begin{array}{c}
\mathbf{1} = \left(\begin{array}{cc}1&0\\0&1\end{array}\right);\quad\mathbf{i} = \left(\begin{array}{cc}0&i\\i&0\end{array}\right);\quad\mathbf{j} = \left(\begin{array}{cc}0&-1\\1&0\end{array}\right);\quad\mathbf{k} = \left(\begin{array}{cc}i&0\\0&-i\end{array}\right)\\\\
\begin{array}{ll}
\hat{X}_0= \frac{1}{\gamma_n+\gamma_e}\,\left(-\gamma_n\,\mathbf{i} \otimes \mathbf{1} + \gamma_e\,\mathbf{1}\otimes\mathbf{i}\right);&\hat{Y}_0= \frac{1}{\gamma_n+\gamma_e}\,\left(\gamma_n\,\mathbf{j} \otimes \mathbf{1} - \gamma_e\,\mathbf{1}\otimes\mathbf{j}\right)\\\\\hat{Z}_0= \frac{1}{\gamma_n+\gamma_e}\,\left(-\gamma_n\,\mathbf{k} \otimes \mathbf{1} + \gamma_e\,\mathbf{1}\otimes\mathbf{k}\right);&\hat{K} = \frac{i}{2}\left(\mathbf{i}\otimes\mathbf{i}+\mathbf{j}\otimes\mathbf{j}+\mathbf{k}\otimes\mathbf{k}\right)
\end{array}
\end{array}
\end{equation}

\noindent then the  physical constraints mean that only time evolution operators of the form $e^X$ where $X$ is a Lie algebra member of the form \linebreak$X=\left(B_x\,\hat{X}_0+B_y\,\hat{Y}_0+B_z\,\hat{Z}_0\pm \kappa\,\hat{K}\right)\,\tau$ with the SI unit values cited above are directly available to the experimenter to control the coupled system's state with. Here $\tau$ is the magnetic field pulse time. The $B_j$ can be any real values within an interval set by the magnetic field pulsing apparatus: today roughly $-10{\rm mT}\leq B_j\leq +10{\rm mT}$ is reasonable. The magnitude of the co-efficient of $\hat{K}\,\tau$ is always the same and cannot be varied, being roughly $58.765{\rm MHz}$. However, its sign can be positive or negative, because the one parameter group $\{e^{s\,\hat{K}}|\;s\in\mathbb{R}\}$ is compact and periodic, {\it i.e}  isomorphic to $U(1)$. The eigenvalues of $\hat{K}$ are $\frac{3}{2}\,i$ and $-\frac{1}{2}\,i$ (the latter a triple), so that $e^{K\,t}$ is periodic in $t$ with period $4\,\pi$. Thus, the period of $e^{\kappa\,\hat{K}\,t}$ is $4\,\pi/\kappa\approx 214{\rm ns}$. Therefore a matrix of the form $e^{-\kappa\,\hat{K}\,t_0}$ is of the form $\exp\left(\kappa\,\hat{K}\,\left(\frac{4\,\pi}{\kappa}-t_0\right)\right)$.  $e^{K\,t}$ is interesting from the experimenter's standpoint because it periodically entangles then unentangles any given input state, and thus, as shown in \cite{KuzmakTkachuk}, a sequence of operations of the form $e^{X\,\tau}$ where $X=\left(B_x\,\hat{X}_0+B_y\,\hat{Y}_0+B_z\,\hat{Z}_0\pm \kappa\,\hat{K}\right)\,\tau$ can be used to prepare any quantum state in $\mathcal{H}_n\otimes\mathcal{H}_e$ given the input of one particular state that is readily prepared by the recipe of \cite{Pla}. As another example, the degree of entanglement is defined in \cite{Chen} as the von Neumann entropy of the renormalised projection of the whole state vector in $\mathcal{H}_n\otimes\mathcal{H}_e$ onto either $\mathcal{H}_n$,\,$\mathcal{H}_e$, which projection, considered in its subspace alone, is equivalent to a mixed quantum state. The degree of entanglement can be shown to be a monotonic function of $\mathscr{D}_e=2\,|z_1\,z_4-z_2\,z_3|\in[0,\,1]$, both the entropy and $\mathscr{D}_e$ vary between $0$ (for a product state) and $1$ (for a maximally entangled state) and, for example, $e^{\kappa\,\hat{K}\,t} (\begin{array}{cccc}1&0&0&0\end{array})^T$ swings periodically between a product state and a maximally entangled one. 

In \S\ref{StateControl} I shall use Lie theoretic ideas to find a sequence of  operations of the form $e^{X\,\tau}$ to be equivalent to {\it any} unitary operator in $SU(4)$, even though the subspace of $\mathfrak{su}(4)$ spanned by vectors of the form \linebreak$X=\left(B_x\,\hat{X}_0+B_y\,\hat{Y}_0+B_z\,\hat{Z}_0\pm \kappa\,\hat{K}\right)\, \tau$ is only four dimensional and therefore very ``small'' and ``uncomplicated'' compared to the full Lie algebra $\mathfrak{su}(4)$. $SU(4)$ is a simple Lie group, so one cannot therefore break its description down into a semidirect product of smaller Lie groups, which decomposition might simply the goal of realising any {\it any} unitary operator in $SU(4)$. However, one particular, non-normal subgroup of $SU(4)$ which is of interest is the subgroup of exponentials of the following Lie algebra.

\begin{lemma} 
\label{DegreeOfEntanglementPreserverLemma}
The Lie group $\exp(\mathfrak{d})$ where:

%\begin{equation}
%\label{DegreeOfEntanglementPreserverLemma_1}
%\mathfrak{d}=\left\{\left.\begin{array}{l}s_{x,1}\, \mathbf{i}\otimes\mathbf{1}+s_{x,2}\,\mathbf{1}\otimes\mathbf{i}+s_{y,1}\, \mathbf{j}\otimes\mathbf{1}+\\\quad\quad s_{y,2}\,\mathbf{1}\otimes\mathbf{j}+s_{z,1}\, \mathbf{k}\otimes\mathbf{1}+s_{z,2}\,%\mathbf{1}\otimes\mathbf{k}\end{array}\right|\quad \begin{array}{l} s_{j,k}\in\mathbb{R};\\j\in\{x,\,y,\,z\};\\k\in\{1,\,2\}\end{array}\right\}
%\end{equation}

\begin{equation}
\label{DegreeOfEntanglementPreserverLemma_1}
\mathfrak{d}=\mathrm{span}_\mathbb{R}\left(\{i\,\mathbf{1}\otimes\mathbf{1},\,\mathbf{i}\otimes\mathbf{1},\,\mathbf{j}\otimes\mathbf{1},\,\mathbf{k}\otimes\mathbf{1},\,\mathbf{1}\otimes\mathbf{i},\,\mathbf{1}\otimes\mathbf{j},\,\mathbf{1}\otimes\mathbf{k}\}\right)
\end{equation}

\noindent  is precisely the subgroup of $U(4)$ that conserves the degree of entanglement of the states its acts on linearly. That is, $\gamma \in \exp(\mathfrak{d})$ iff $\mathscr{D}_e(\gamma\,\psi) = \mathscr{D}_e(\psi)\,\forall\,\psi\in\mathcal{H}_n\otimes\mathcal{H}_e$ where $\mathscr{D}_e(\psi) = 2\,|z_1\,z_4-z_2\,z_3|$ and $\psi$ and the $z_j$ are defined in \eqref{BasicHamiltonian}.  This Lie group is isomorphic to $U(1)\times SU(2)\times SU(2)$ and comprises all matrices of the form $\boldsymbol{\zeta}_e\otimes\boldsymbol{\zeta}_n$, where $\boldsymbol{\zeta}_e,\,\boldsymbol{\zeta}_n$ are $2\times2$ unitary matrices acting separately on the indivdual spin substates.
\end{lemma}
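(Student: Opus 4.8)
The plan is to reduce the lemma to a statement about a single non-degenerate quadratic form together with the classical ``accidental'' isomorphism $\mathrm{Spin}(4)\cong SU(2)\times SU(2)$. Write the amplitudes of $\psi$ in \eqref{BasicHamiltonian} as the $2\times 2$ matrix $Z$ with first row $(z_1,z_2)$ and second row $(z_3,z_4)$; then $z_1z_4-z_2z_3=\det Z$, so $\mathscr{D}_e(\psi)=2\,|\det Z|$. This reshaping is a linear isomorphism $\mathbb{C}^4\to M_2(\mathbb{C})$, so each $\gamma\in U(4)$ induces a linear map $T_\gamma$ on $M_2(\mathbb{C})$; since $\gamma$ preserves the norm and both $\det$ and $\|\cdot\|$ are homogeneous, ``$\mathscr{D}_e(\gamma\psi)=\mathscr{D}_e(\psi)$ on unit vectors'' is equivalent to $|\det T_\gamma(Z)|=|\det Z|$ for \emph{every} $Z\in M_2(\mathbb{C})$. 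A one-line index computation shows $A\otimes B$ (for $A,B\in U(2)$) acts as $Z\mapsto A\,Z\,B^{T}$, multiplying $\det Z$ by $(\det A)(\det B)$, which has unit modulus; this is the easy half -- every such tensor operator preserves $\mathscr{D}_e$.

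Next I would identify $\exp(\mathfrak{d})$ concretely. Checking brackets on the listed spanning set, $[\mathbf{q}_1\otimes\mathbf{1},\mathbf{q}_2\otimes\mathbf{1}]=[\mathbf{q}_1,\mathbf{q}_2]\otimes\mathbf{1}$ and its $\mathbf{1}\otimes\mathbf{q}$ analogue stay inside the respective $\mathrm{span}(\mathbf{i},\mathbf{j},\mathbf{k})$ copies because $\{\mathbf{i},\mathbf{j},\mathbf{k}\}$ spans $\mathfrak{su}(2)$; the two blocks commute because the tensor factors do; and $i\,\mathbf{1}\otimes\mathbf{1}$ is central (note $i\,\mathbf{1}\otimes\mathbf{1}=\mathbf{1}\otimes i\,\mathbf{1}$, which is why only one phase generator is listed). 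Thus $\mathfrak{d}\cong\mathfrak{u}(1)\oplus\mathfrak{su}(2)\oplus\mathfrak{su}(2)$, seven-dimensional; and since $U(1)$ and $SU(2)$ are connected with surjective exponentials and the three ideals commute pairwise, $\exp(\mathfrak{d})=\{e^{i\theta}\,U\otimes V:\theta\in\mathbb{R},\ U,V\in SU(2)\}$. Writing any $2\times 2$ unitary as $e^{i\phi}U$ with $U\in SU(2)$ shows this equals $\{A\otimes B:A,B\in U(2)\}$; combined with the first paragraph, $\exp(\mathfrak{d})$ is contained in the $\mathscr{D}_e$-stabiliser. (The natural surjection $U(1)\times SU(2)\times SU(2)\to\exp(\mathfrak{d})$ has a finite central kernel, so ``isomorphic to $U(1)\times SU(2)\times SU(2)$'' is meant in the local sense, up to a finite group.)

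The hard half -- that a $\mathscr{D}_e$-preserving $\gamma\in U(4)$ must be a tensor operator -- is where the real work is. The form $Q(\psi):=\det Z=z_1z_4-z_2z_3$ is a non-degenerate quadratic form on $\mathbb{C}^4$, and there is a \emph{unitary} change of coordinates -- multiply the $-\tfrac12$-eigenvectors of its real symmetric Gram matrix by $i$ -- after which $Q=\tfrac12\sum_k w_k^2$. Given $\gamma\in U(4)$ with $|Q\circ\gamma|=|Q|$, the quotient $Q(\gamma\cdot)/Q(\cdot)$ is holomorphic of modulus one on the connected set $\{Q\neq 0\}$, hence a constant $\mu^{2}$ with $|\mu|=1$; then $\mu^{-1}\gamma$ is still unitary and preserves $Q$ exactly, so $\mu^{-1}\gamma\in U(4)\cap O(4,\mathbb{C})$, which -- from $\gamma^{T}\gamma=\gamma^{\dagger}\gamma=\mathbf{1}$ forcing $\overline{\gamma}=\gamma$ -- is the compact real form $O(4,\mathbb{R})$. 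Its identity component $SO(4,\mathbb{R})$ is the $6$-dimensional connected group $\{U\otimes V:U,V\in SU(2)\}$ (by a dimension count, or directly from $\mathrm{Spin}(4)\cong SU(2)\times SU(2)$), so $\mu^{-1}\gamma\in SO(4,\mathbb{R})$ gives $\gamma=\mu\,U\otimes V\in\exp(\mathfrak{d})$, as wanted.

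I expect the genuine obstacle to be the word \emph{precisely}: $O(4,\mathbb{R})$ has a second component, of determinant $-1$, which in the reshaped picture is the set of maps $Z\mapsto A\,Z^{T}B$ -- i.e. $\gamma$ of the form $S\,(U\otimes V)$ with $S$ the qubit swap. Since transposing $Z$ leaves $\det Z$ fixed, $S$ preserves $\mathscr{D}_e$, yet $S$ is not a tensor product, so $S\notin\exp(\mathfrak{d})$. Hence the fully correct statement is that $\exp(\mathfrak{d})$ is the \emph{identity component} of the $U(4)$-stabiliser of $\mathscr{D}_e$, the whole stabiliser being $\exp(\mathfrak{d})\sqcup S\exp(\mathfrak{d})$; one must either add this qualification to the lemma, or exclude $S$ on the physical ground that the nuclear and electronic qubits are not interchangeable. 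Everything else -- the bracket relations, the reshaping identity $z_1z_4-z_2z_3=\det Z$, and the appeal to $\mathrm{Spin}(4)\cong SU(2)\times SU(2)$ -- is routine.
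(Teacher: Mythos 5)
Your proposal is correct, and it takes a genuinely different route from the paper. The paper's proof works at the infinitesimal level: it observes that $2\,(z_1z_4-z_2z_3)=\psi^TQ\psi$ with $Q=\mathbf{j}\otimes\mathbf{j}$, characterises an ``infinitesimal $\mathscr{D}_e$-conserver'' by the matrix equation $H^TQ+QH=i\,\phi_0\,Q$, and solves that linear system directly to read off $\mathfrak{d}$. You instead work at the group level: reshape $\psi$ into $Z\in M_2(\mathbb{C})$ so that $\mathscr{D}_e=2|\det Z|$, use the holomorphicity argument to promote $|Q\circ\gamma|=|Q|$ to $Q\circ\gamma=\mu^2Q$, observe that the unitary change of basis diagonalising $Q$ identifies the $Q$-exact stabiliser inside $U(4)$ with $O(4,\mathbb{R})$, and then invoke $\mathrm{Spin}(4)\cong SU(2)\times SU(2)$. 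What your route buys is global information the paper's route cannot see: working with $\gamma=e^H$ can only ever produce the \emph{identity component} of the stabiliser, and that is precisely where the discrepancy lies.

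That discrepancy is a genuine correction to the lemma as stated, and you should not soften it. The qubit-swap $S$ (the permutation matrix fixing $z_1,z_4$ and exchanging $z_2,z_3$, i.e.\ $Z\mapsto Z^T$) satisfies $\mathscr{D}_e(S\psi)=\mathscr{D}_e(\psi)$ for all $\psi$, yet $S$ is not of the form $A\otimes B$ (a short unitary-row argument shows $a_{11}b_{11}=1$ and $a_{11}b_{22}=0$ cannot coexist for $B\in U(2)$), so $S\notin\exp(\mathfrak{d})$. The full $\mathscr{D}_e$-stabiliser therefore has two components, $\exp(\mathfrak{d})\sqcup S\exp(\mathfrak{d})$, corresponding exactly to the two components of $O(4,\mathbb{R})$ in your picture, and the lemma's ``iff'' holds only in the direction $\gamma\in\exp(\mathfrak{d})\Rightarrow\mathscr{D}_e$-preserving; the converse requires the added hypothesis $\det\gamma$ on the appropriate component, or the qualifier ``identity component''. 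You are also right that ``isomorphic to $U(1)\times SU(2)\times SU(2)$'' is only a local isomorphism: the covering map $(e^{i\theta},U,V)\mapsto e^{i\theta}\,U\otimes V$ has kernel $\{(\pm1,\pm\mathbf{1},\pm\mathbf{1}):\text{product of signs}=1\}\cong\mathbb{Z}_2\times\mathbb{Z}_2$, so $\exp(\mathfrak{d})$ is a quotient of that group, not the group itself. Both caveats should be recorded if the lemma is to be stated precisely; neither affects the rest of the paper, which only uses $\exp(\mathfrak{d})$ as the connected local-product group.

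One small stylistic note: the reshaping identity $z_1z_4-z_2z_3=\det Z$ and the rule $A\otimes B:Z\mapsto AZB^T$ give a cleaner and more conceptual verification of the easy direction than the paper's algebraic unwinding of \eqref{DegreeOfEntanglementPreserverLemma_2}, and your argument makes transparent \emph{why} $\mathrm{Spin}(4)\cong SU(2)\times SU(2)$ is the structural reason the stabiliser factors as it does, which the paper leaves implicit.
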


\begin{proof}
\label{DegreeOfEntanglementPreserver} The precise condition for conservation of $\mathscr{D}_e$ is readily shown to be equivalent to $\gamma=e^H$ where $H$, an  ``infinitessimal $\mathscr{D}_e$ conserver'', fulfills: 

\begin{equation}
\label{DegreeOfEntanglementPreserverLemma_2}
\begin{array}{cl}
&H^T\,Q + Q H = i\,\phi_0\,Q\text{ for some }\phi_0\in\mathbb{R};\;Q=Q^T = \mathbf{j}\otimes\mathbf{j};\\
\Leftrightarrow 
%\\\\\begin{array}{lcl}H&=&\left(
%\begin{array}{cccc}
% i (\beta_1+\phi_0) & p_{1\,2}+i\, q_{1\,2} & p_{1\,3}+i q_{1\,3} & 0 \\
% i\,  q_{1\,2}-p_{1\,2} & i\,  (\beta_2+\phi_0) & 0 & p_{1\,3}+i\,  q_{1\,3} \\
% i\,  q_{1\,3}-p_{1\,3} & 0 & -i\,  (\beta_2-\phi_0) & p_{1\,2}+i q_{1\,2} \\
% 0 & i \, q_{1\,3}-p_{1\,3} & i \, q_{1\,2}-p_{1\,2} & -i (\beta_1-\phi_0) \\
%\end{array}
%\right);\;\beta_j,\,\phi_0,\,p_{j\,k},\,q_{j\,k}\in\mathbb{R}\\\\
&H=i\,\phi_0\,\mathbf{1}\otimes\mathbf{1}+q_{1\,3}\,\mathbf{i}\otimes\mathbf{1}-p_{1\,3}\,\mathbf{j}\otimes\mathbf{1}+\frac{\beta_1+\beta_2}{2}\,\mathbf{k}\otimes\mathbf{1}+\\&\quad q_{1\,2}\,\mathbf{1}\otimes\mathbf{i}-p_{1\,2}\,\mathbf{1}\otimes\mathbf{j}+\frac{\beta_1-\beta_2}{2}\,\mathbf{1}\otimes\mathbf{k}
%\end{array}
;\;\beta_j,\,\phi_0,\,p_{j\,k},\,q_{j\,k}\in\mathbb{R}\end{array}
\end{equation}

\noindent and the claimed isomorphism follows straight away from the definition of $\mathfrak{d}$.\qedhere

\end{proof}

\section{Lie Theoretic Grounding}
\label{LieBackground}

I now turn to the theorem alluded to in the abstract. The full solution of several classic applied mathematics problems rests on this proof; such problems as {\bf ({\it i})} the Nelson Parking Problem\cite{Nelson, RossmannNelsonParkingProblem}, where the $\exp(t_j\,X_j)$ are the changes in a car's configuration (position and orientation on Euclidean $\mathbb{R}^2$) wrought by driving forwards or backwards in a car with a constant path curvature (constant steering setting) and the smallest Lie algebra containing the $X_j$ is the whole of $\mathfrak{e}(2)$ or {\bf ({\it ii})} Montgomery's description\cite{Montgomery} of the torque free falling cat as a fibre bundle with the space of cat shapes $\mathcal{S}$ as the base space, the space of cat's orientation in an inertial frame (described by an element of $SO(3)$) as the fibre and the structure (gauge) group is some Lie subgroup of $SO(3)$, depending on what the smallest Lie algebra generated by the shape shifting moves the cat can make is. The connexion on the bundle arises from computing the shift in the cat's orientation that must accompany a shape shift so as to conserve the cat's total angular momentum. One last problem\cite{Vance}, less noteworthy than these but worth citing here because I personally have formerly dealt with is the proof that a finite system of $N$ planar single mode waveguides can realise any transfer matrix in $U(N)$ even though there is only nearest neighbour coupling between the waveguides. Mathematically, this means that the $X_j$ span only tridiagonal skew-Hermitian $N\times N$ matrices, but the smallest Lie algebra containing these is the whole of $U(N)$. Therefore (since $U(N)$ is connected), a finite sequence of nearest neighbour coupled waveguides can realise any transfer matrix in $U(N)$ by dint of Theorem \ref{GeneratedLieSubgroupTheorem}. A second reason for mentioning this particular application is that, even though Theorem \ref{GeneratedLieSubgroupTheorem} and the methods of the present paper give a solution that can be broadened to any Lie group and algebra, in the case of nearest neighbour coupled waveguides there is a particular solution which, in my opinion, is superior to my own work for some applications and that is the algorithm presented in \cite{ReckZeilinger}. Here a sequence of transfer matrices of $2\times 2$ symmetric coupled waveguides fore-multiply a general $U(N)$ member in a procedure somewhat like Gaussian elimination. The unitarity of the transfer matrix means that rows and columns are reduced to one, unit magnitude element at once (instead of needing reduction separately as in Gaussian elimination), thus $N-1$ basic steps reduces the problem of synthesis of a general element of $U(N)$ to the simpler problem of a realisation of a general element of $U(N-1)$, thence, on application of $N-2$ basic steps, to a member of $U(N-2)$ thus the algorithm, inductively, will decompose the general $U(N)$ to a realisation as a finite sequence of symmetric couplers and phase delays.

I now state and prove in detail the first theorem cited in the abstract; indeed this is a slightly more general result, holding for any set of analytic paths $\sigma:\mathbb{R}\to\mathfrak{G}$ passing through the identity instead of a set of one-parameter groups $\{\exp(t\,X_j)|\;X_j\in\mathfrak{g};\;t\in\mathbb{R}\}$. Results similar to this one seem often to be tacitly assumed in the literature but explicit proofs seem to be few. Many texts justify something like the following theorem with a glib appeal to the Trotter product formulas \linebreak$\exp([X\,Y])=\lim\limits_{n\to\infty} \left(\exp\left(\frac{X}{n}\right)\,\exp\left(\frac{Y}{n}\right)\,\exp\left(-\frac{X}{n}\right)\, \exp\left(-\frac{X}{n}\right)\right)^{n^2}$ and $\exp(X+Y)=\lim\limits_{n\to\infty} \left(\exp\left(\frac{X}{n}\right)\,\exp\left(\frac{Y}{n}\right)\right)^n$. However, an argument grounded on these formulas only establishes a weaker version (stated in \cite{Vance}) of the result in the abstract, to wit, that the {\it closure} of the group generated by the $e^{\tau\,\hat{X}_j}$ is $\exp(\mathfrak{h})$, where $\mathfrak{h}$ is the smallest Lie algebra containing the $\hat{X}_j$, so that a finite product of terms of the form  $e^{\tau\,\hat{X}_j}$ can be arbitrarily near to any element of $\mathfrak{H}=\langle\exp(\mathfrak{h})\rangle$. This is not a restriction or weakening if the smallest Lie algebra concerned is that of a {\it closed} (and topologically embedded) subgroup of the whole, but this of course is not always so. In this latter case, one can then strengthen the weaker result to be equivalent to the result proven below for $GL(N,\,\mathbb{R})$ through the result that every linear Lie group is Lie-isomorphic (if not the same as) to a {\it closed} Lie subgroup of $GL(N,\,\mathbb{R})$ as shown in \cite{Goto}. For example, the irrational slope one parameter subgroup of the 2-torus group $U(1)\times U(1)$ is not closed, but it is isomorphic to $(\mathbb{R},\,+)$ and the latter is a linear Lie group and closed as the subgroup of $GL(2,\,\mathbb{R})$ comprising upper triangular $2\times2$ matrices with ones along the leading diagonal. The following proof avoids all this complexity to get the stronger result directly and constructively; thus can guide the general algorithms of \S\ref{StateControl}. 

\begin{theorem}[Steering a Lie Group Path Without the full rank Tangent Space]
\label{GeneratedLieSubgroupTheorem}

Let $\mathfrak{G}$ be a connected Lie group, $\mathfrak{g}$ its Lie algebra. Kit the appropriate sized nucleus $\mathcal{N}_\mathrm{id}$ with geodesic co-ordinates so that the co-ordinate map is $\lambda = \log:\mathcal{N}_\mathrm{id}\to\mathfrak{g}$ and now let $\sigma_j:[-1,\,1]\to\mathcal{N}_\mathrm{id};\;\sigma_j(0)=\mathrm{id};\,\left.\mathrm{d}_\tau \sigma_j(\tau)\right|_{\tau=0}=\hat{X}_j;\;j=1,\,2,\,\cdots,\,M$ be $M$ $C^\omega$ paths (in the geodesic co-ordinates) through $\mathfrak{G}$ with $\sigma_j(-\tau) = \sigma_j(\tau)^{-1}$. The $\hat{X}_j$ may or may not span the whole of $\mathfrak{g}$; the typical situation, and main point of this theorem, is where they do not.

Let furthermore $\mathfrak{h}$ be the smallest Lie algebra containing the $\hat{X}_j$; otherwise put: $\mathfrak{h}$ is the intersection of all Lie algebras containing the $\hat{X}_j$ and is the set of all entities that can be gotten from the $\hat{X}_j$ by finite sequences of linear (scaling and addition) and Lie bracket operations. Then for every $X\in\mathfrak{h}$ there is a finite number of terms product of the basic paths defined by:

\begin{equation}\label{GeneratedLieSubgroupTheoremProof_1}
\sigma:[-1,\,1]\to\mathfrak{G};\;\sigma(\tau) = \prod\limits_{k=1}^R\,\sigma_k(\alpha_k\,\tau)
\end{equation}

\noindent such that the tangent to the path $\sigma$ at the identity is $\left.\mathrm{d}_\tau \sigma(\tau)\right|_{\tau=0}=X$.

\end{theorem}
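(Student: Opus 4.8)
The plan is to study the set $\mathcal{V}\subseteq\mathfrak{g}$ of all velocities $\left.\mathrm{d}_\tau\sigma(\tau)\right|_{\tau=0}$ that can arise from a finite ordered product of the basic paths in which each factor is reparametrised by an affine map $\tau\mapsto\beta+\gamma\tau$ and the constant terms are arranged to multiply to $\mathrm{id}$, so that $\sigma(0)=\mathrm{id}$; such a $\sigma$ carries interspersed ``fixed pulses'' $\sigma_j(\beta)$ that are later undone by $\sigma_j(-\beta)$. First I would record that $\mathcal{V}$ is a real linear subspace of $\mathfrak{g}$ containing every $\hat X_j$: it contains each $\hat X_j$ (take $\sigma=\sigma_j$); it is closed under scaling ($\sigma(\tau)\mapsto\sigma(c\,\tau)$); and it is closed under addition, since for two such paths $\sigma,\rho$ the Leibniz rule in $\mathfrak{G}$ together with $\sigma(0)=\rho(0)=\mathrm{id}$ gives $\left.\mathrm{d}_\tau(\sigma(\tau)\rho(\tau))\right|_{0}=\left.\mathrm{d}_\tau\sigma\right|_{0}+\left.\mathrm{d}_\tau\rho\right|_{0}$. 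By the description of $\mathfrak{h}$ given in the theorem (the span of the $\hat X_j$ together with all their iterated brackets), the whole statement reduces to showing that $\mathcal{V}$ is stable under $X\mapsto[\hat X_j,X]$ for each generator $\hat X_j$.

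The device for producing brackets is conjugation by a fixed pulse. If $\rho$ is an admissible finite product realising a velocity $V\in\mathcal{V}$ and $g=\sigma_j(t)$ for a fixed $t$, then the group commutator $\tau\mapsto g\,\rho(\tau)\,g^{-1}\,\rho(\tau)^{-1}$ is again an admissible finite product — the constant factors $g=\sigma_j(t)$ and $g^{-1}=\sigma_j(-t)$ cancel at $\tau=0$ because $\sigma_j(-\tau)=\sigma_j(\tau)^{-1}$ — and a short Leibniz computation gives its velocity at $\tau=0$ as $(\mathrm{Ad}_{g}-I)V$. Hence $(\mathrm{Ad}_{\sigma_j(t)}-I)V\in\mathcal{V}$ for every $t\in\mathbb{R}$. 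Now $t\mapsto(\mathrm{Ad}_{\sigma_j(t)}-I)V$ is an analytic curve through the origin of $\mathcal{V}$ whose derivative at $t=0$ is $[\hat X_j,V]$ (using $\left.\mathrm{d}_t\sigma_j\right|_{0}=\hat X_j$, and for the special case of genuine one-parameter groups $\mathrm{Ad}_{\sigma_j(t)}=\exp(t\,\mathrm{ad}_{\hat X_j})$ so all formulas become fully explicit); since $\mathcal{V}$ is closed (being finite-dimensional) and contains every value of this curve, it contains its derivative at $0$, i.e. $[\hat X_j,V]\in\mathcal{V}$. Iterating over the generators shows $\mathcal{V}$ contains every iterated bracket of the $\hat X_j$, hence $\mathcal{V}\supseteq\mathfrak{h}$, which is the theorem.

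To convert this into the explicit algorithm promised in the abstract I would make the bracket step constructive, and this is the step I expect to be the real work. The subspace $S:=\mathrm{span}_\mathbb{R}\{(\mathrm{Ad}_{\sigma_j(t)}-I)V:\,t\in\mathbb{R}\}$ is finite-dimensional and, by the previous paragraph, contains $[\hat X_j,V]$; hence there are finitely many pulse strengths $t_1,\dots,t_m$ and coefficients $c_1,\dots,c_m\in\mathbb{R}$ with $\sum_k c_k(\mathrm{Ad}_{\sigma_j(t_k)}-I)V=[\hat X_j,V]$ \emph{exactly}. Scaling the corresponding commutator blocks by the $c_k$ and concatenating them (both admissible operations) then produces an explicit finite product of the form \eqref{GeneratedLieSubgroupTheoremProof_1} whose velocity is precisely $[\hat X_j,V]$; running this recursively along a basis of $\mathfrak{h}$ assembled from iterated brackets of the $\hat X_j$, and finally multiplying the resulting paths with appropriate scalings, realises an arbitrary $X\in\mathfrak{h}$ as the velocity of an explicit finite product. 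The main obstacle is the solvability of that last linear system — equivalently, checking that finitely many pulse strengths $t_k$ already exhaust the span of $\{(\mathrm{Ad}_{\sigma_j(t)}-I)V:t\in\mathbb{R}\}$ — and this is exactly where the finite-dimensionality of $\mathfrak{h}$, i.e. the finite degree of the minimal polynomial of $\mathrm{ad}_{\hat X_j}$ on the cyclic data $V,\mathrm{ad}_{\hat X_j}V,\dots$, is indispensable: without it the infinitely many higher-order bracket corrections $\mathrm{ad}_{\hat X_j}^{\,n}V$ could not be annihilated by any finite combination, and one would be left only with the weaker closure statement rather than an exact finite product.
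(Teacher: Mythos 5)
Your proof is correct and rests on the same core idea as the paper's: conjugate an achievable path by a fixed pulse $\sigma_j(t)$, observe that the resulting velocities $\mathrm{Ad}_{\sigma_j(t)}V$ (in your version, $(\mathrm{Ad}_{\sigma_j(t)}-I)V$) sweep out a finite-dimensional subspace of $\mathfrak{g}$ as $t$ varies, and conclude that finitely many pulse strengths combine linearly to yield the bracket $[\hat X_j, V]$, whereupon one iterates. That said, the packaging differs in ways worth noting. (\emph{i}) You use the genuine group commutator $g\,\rho(\tau)\,g^{-1}\,\rho(\tau)^{-1}$, whose $\tau$-velocity is $(\mathrm{Ad}_g-I)V$, a curve through the origin of the velocity set $\mathcal{V}$; the paper instead uses bare conjugation $\sigma_X(s)\,\sigma_Y(\tau)\,\sigma_X(-s)$, whose $\tau$-velocity is $\mathrm{Ad}_{\sigma_X(s)}Y$, starting at $Y$ rather than at $0$, and then must absorb the residual $Y$-component later in the Gaussian elimination. (\emph{ii}) You obtain $[\hat X_j, V]\in\mathcal{V}$ abstractly, by noting that $\mathcal{V}$ is a finite-dimensional (hence closed) subspace containing an analytic curve through the origin with the desired tangent; the paper instead writes $T(s)$ out in a basis and asserts linear independence of the coefficient functions $\tilde p_j(s)$ before invoking Gaussian elimination. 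Your closedness argument is the more watertight of the two: the paper's linear-independence claim for the $\tilde p_j$ is stated rather than proved, whereas ``a finite-dimensional subspace contains every limit of its own difference quotients'' is immediate. Conversely, the paper's explicit expansion in $\{Y,\,\mathrm{ad}(X)Y,\,\mathrm{ad}(X)^2Y,\dots\}$ feeds directly into the stated Algorithm (the Gaussian-elimination step), whereas your route makes the existence statement first and adds the constructive step as a second pass; the two are logically equivalent, but the paper's ordering is better adapted to reading off the algorithm. (\emph{iii}) You explicitly flag the reduction to single-generator brackets $V\mapsto[\hat X_j,V]$ via left-normalisation of iterated brackets; the paper uses this implicitly but does not state it. Finally, one small correction to your closing remark: since the $\sigma_j$ are general analytic paths rather than one-parameter groups, $\mathrm{Ad}_{\sigma_j(t)}$ is not $\exp(t\,\mathrm{ad}_{\hat X_j})$ and the relevant finite-dimensional space is not literally the $\mathrm{ad}_{\hat X_j}$-cyclic subspace of $V$ — but this does not affect your argument, since all you actually use is finite-dimensionality of a subspace of $\mathfrak{g}$, which is automatic.
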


\begin{corollary}
\label{GeneratedLieSubgroupTheoremCorollary} 

From Theorem \ref{GeneratedLieSubgroupTheorem}, and from \cite{RossmannSecondKindCanonicalCoordinates} it follows that every member of $\exp(\mathfrak{h})$ and thus every element of the (possibly non topoligically embdedded) Lie subgroup $\mathfrak{H} = \bigcup\limits_{k=1}^\infty\exp(\mathfrak{h})^k$ corresponding to $\mathfrak{h}\subset\mathfrak{g}$ under the Lie Correspondence\cite{RossmannLieCorrespondence} can be realised as a finite product of the form $\prod\limits_{k=1}^Q \sigma_{j(k)}(\tau_k)$ i.e. as a finite product of terms of the form $\sigma_j(\tau_k)$.

\end{corollary}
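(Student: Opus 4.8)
The plan is to identify the set of all achievable operators, $\mathcal{R}:=\left\{\,\prod_{k=1}^{Q}\sigma_{j(k)}(\tau_k)\ :\ Q\in\mathbb{N},\ \tau_k\in[-1,1]\,\right\}$, as a subgroup of $\mathfrak{G}$ and then to show it swallows all of $\mathfrak{H}$. That $\mathcal{R}$ is a subgroup is immediate: it is closed under products by definition, contains $\mathrm{id}=\sigma_j(0)$, and is closed under inversion because $\left(\prod_{k=1}^{Q}\sigma_{j(k)}(\tau_k)\right)^{-1}=\prod_{k=1}^{Q}\sigma_{j(Q+1-k)}(-\tau_{Q+1-k})$, again a finite product of basic path values (here the hypothesis $\sigma_j(-\tau)=\sigma_j(\tau)^{-1}$ from Theorem \ref{GeneratedLieSubgroupTheorem} is used). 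Since $\mathfrak{h}$ is a Lie subalgebra of $\mathfrak{g}$, the Lie Correspondence\cite{RossmannLieCorrespondence} attaches to it a unique connected (possibly only immersed) Lie subgroup $H$ with Lie algebra $\mathfrak{h}$, and one checks $H=\bigcup_{k\ge1}\exp(\mathfrak{h})^{k}=\mathfrak{H}$; being connected, $H$ is generated as an abstract group by any neighbourhood of $\mathrm{id}$ in its intrinsic topology. Hence it suffices to prove that $\mathcal{R}$ contains one such intrinsic neighbourhood of $\mathrm{id}$ in $H$, and for that I would fix a basis $Y_1,\dots,Y_d$ of $\mathfrak{h}$ adapted to canonical coordinates of the second kind as in \cite{RossmannSecondKindCanonicalCoordinates}, so that $\Xi(s_1,\dots,s_d):=\exp(s_1Y_1)\cdots\exp(s_dY_d)$ maps a neighbourhood of $0\in\mathbb{R}^{d}$ diffeomorphically onto an intrinsic neighbourhood of $\mathrm{id}$ in $H$. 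The reduction is then complete once one shows $\exp(sY_i)\in\mathcal{R}$ for every $i$ and every real $s$, because $\mathcal{R}$ is a group and $\Xi$ is a finite product of such factors.

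This is where Theorem \ref{GeneratedLieSubgroupTheorem} enters. Applied to each $Y_i\in\mathfrak{h}$ it delivers a concrete finite product of basic paths $\rho_i(\tau)=\prod_k\sigma_{j_i(k)}(\alpha_{i,k}\tau)$ with $\rho_i(0)=\mathrm{id}$ and $\left.\mathrm{d}_\tau\rho_i(\tau)\right|_{\tau=0}=Y_i$; note $\rho_i(\tau)\in\mathcal{R}$ for every $\tau$. Assembling these, $\Psi(t_1,\dots,t_d):=\rho_1(t_1)\cdots\rho_d(t_d)$ is a $C^\omega$ map into $\mathcal{R}$ with $\Psi(0)=\mathrm{id}$ whose differential at the origin sends the $i$th coordinate vector to $Y_i$ and is therefore a linear isomorphism onto $\mathfrak{h}\subset\mathfrak{g}$; thus $\log\circ\Psi$ is an immersion of a small ball $B\subset\mathbb{R}^{d}$ and $S:=\Psi(B)$ is a $C^\omega$ $d$-dimensional submanifold lying inside $\mathcal{R}$, through $\mathrm{id}$, with $T_{\mathrm{id}}S=\mathfrak{h}$. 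It then remains to promote this submanifold of directions to genuine one-parameter subgroups: because $\mathcal{R}$ is a group it contains, along with a $C^\omega$ curve $c_i(\tau)\subset S$ tangent to $Y_i$ at $\mathrm{id}$, all of the ``difference-quotient'' curves $\tau\mapsto c_i(\tau+h)\,c_i(\tau)^{-1}$, and letting $h$ and $\tau$ vary one extracts from these a one-parameter subgroup of $\mathcal{R}$ tangent to $Y_i$, which can only be $\exp(\mathbb{R}Y_i)$. Granting $\exp(\mathbb{R}Y_i)\subset\mathcal{R}$ for each $i$, the image of $\Xi$ lies in $\mathcal{R}$, so $\mathcal{R}$ contains an intrinsic neighbourhood of $\mathrm{id}$ in $H$; by the connectedness remark $\mathcal{R}\supseteq H=\mathfrak{H}\supseteq\exp(\mathfrak{h})$, which is exactly Corollary \ref{GeneratedLieSubgroupTheoremCorollary}.

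The hard part will be the promotion step of the previous paragraph: a finite product of the $\sigma_j$ controls only the first-order behaviour of the resulting path at $\mathrm{id}$, so to land exactly on $\exp(sY_i)$ rather than merely near it one must cancel every higher-order remainder term of $\rho_i$ (equivalently of $c_i$) by further finite products. This is precisely the point at which the result is genuinely stronger than what the Trotter product formulas give — those would only place $\exp(sY_i)$ in the \emph{closure} of $\mathcal{R}$ — and I expect it to be carried either by the difference-quotient construction sketched above, made quantitative, or by feeding the output of Theorem \ref{GeneratedLieSubgroupTheorem} back into itself so that the remainder directions (which, after the appropriate bracket bookkeeping, lie in $\mathfrak{h}$) are themselves steered to zero. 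Everything else — the subgroup property of $\mathcal{R}$, the isomorphism $\mathrm{d}\Psi_0:\mathbb{R}^{d}\to\mathfrak{h}$, and the passage from a neighbourhood of $\mathrm{id}$ to all of the connected group $H$ — is routine, and the construction is explicit enough to be handed directly to the algorithms of \S\ref{StateControl}.
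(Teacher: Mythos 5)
Your overall strategy --- show $\mathcal{R}$ is a subgroup, show it contains an intrinsic neighbourhood of $\mathrm{id}$ in $\mathfrak{H}$, then invoke connectedness --- matches the intent of the paper's two-sentence argument at the end of Theorem~\ref{GeneratedLieSubgroupTheorem}'s proof. But the ``promotion step'' that you rightly flag as the hard part is also the point at which your proof has a genuine hole, and the difference-quotient sketch does not fill it: the curves $\tau\mapsto c_i(\tau+h)\,c_i(\tau)^{-1}$ lie in $\mathcal{R}$, but nothing about ``letting $h$ and $\tau$ vary'' extracts a one-parameter \emph{subgroup} from them. Any honest extraction of $\exp(\mathbb{R}Y_i)$ from a sequence of curves merely tangent to $Y_i$ is a limiting procedure, and as you yourself observe that would only place $\exp(sY_i)$ in the \emph{closure} of $\mathcal{R}$, which is precisely the weaker Trotter-type statement the paper is at pains to improve upon.

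The step you are missing is that you do not need to land on $\exp(sY_i)$ at all. The assertion ``$S:=\Psi(B)$ is a $d$-dimensional submanifold of $\mathfrak{G}$ with $T_{\mathrm{id}}S=\mathfrak{h}$'' undersells what you already have. Once the $\sigma_j$ (and hence the $\rho_i$ and $\Psi$) take values \emph{in} the integral subgroup $\mathfrak{H}$ --- which is automatic in the specialisation $\sigma_j(\tau)=\exp(\tau\hat X_j)$ with $\hat X_j\in\mathfrak{h}$ that the rest of the paper uses, and is the setting the corollary feeds into --- then $\Psi:B\to\mathfrak{H}$ is a $C^\omega$ map between manifolds of the \emph{same} dimension $d=\dim\mathfrak{h}$ whose differential at $0$ is invertible. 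The inverse function theorem applied inside $\mathfrak{H}$ then says $S=\Psi(B)$ is \emph{open} in $\mathfrak{H}$, not merely tangent to it, and so $\mathcal{R}\supseteq S$ already contains the intrinsic neighbourhood you were seeking. This is the content of the citation to Rossmann's second-kind canonical coordinates in the paper's proof, and it makes the whole $\exp(sY_i)\in\mathcal{R}$ detour unnecessary. You should also be aware that if one insists on truly arbitrary analytic $\sigma_j$ that are allowed to wander off $\mathfrak{H}$ (which the theorem statement formally permits), then $S\not\subset\mathfrak{H}$ and neither your argument nor the paper's citation-based one closes cleanly; a heavier tool of Yamabe type would be needed, but that is not what either proof is really aiming at.
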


\begin{proof}[Proof of Theorem \ref{GeneratedLieSubgroupTheorem}]
\label{GeneratedLieSubgroupTheoremProof} 

Given two $C^\omega$ path segments $\sigma_X,\,\sigma_Y:[-1,\,1]\to\mathfrak{G}$ through $\mathfrak{G}$ with $\sigma_X(0)=\sigma_Y(0) = \mathrm{id}$, $\sigma_X(-\tau) = (\sigma_X(\tau))^{-1}$, $\sigma_Y(-\tau) = (\sigma_Y(\tau))^{-1}$ and with tangents $X = \left.\mathrm{d}_\tau\sigma_X(\tau)\right|_{\tau=0}$, $Y= \left.\mathrm{d}_\tau\sigma_Y(\tau)\right|_{\tau=0}$ to the identity, we first show how to realise a path with tangent $[X,\,Y]$ to the identity as a finite product of these paths. We assume, of course, that $[X,\,Y]$ must be nonzero and linearly independent of $X$ and $Y$ (otherwise there is nothing to prove: the smallest Lie algebra containing $X$ and $Y$ is the vector space spanned by them).

Thus let:

\begin{equation}
\label{GeneratedLieSubgroupTheoremProof_2}
\begin{array}{lcl}
\sigma_X(\tau) &=& \exp\left(\tau\,X + \sum\limits_{j=1}^N p_j(\tau)\,\hat{X}_j\right)\\
\sigma_Y(\tau) &=& \exp\left(\tau\,Y + \sum\limits_{j=1}^N q_j(\tau)\,\hat{X}_j\right)
\end{array}
\end{equation}

\noindent where the $p_j(\tau)$ and $q_j(\tau)$ are analytic functions of $\tau$ comprising only second and higher powers of $\tau$ and we must allow for the possibility of all the $\hat{X}_j$ that span $\mathfrak{g}$ being present as terms multiplied by second and higher powers of $\tau$. So now we consider the tangent at the identity to the family of $C^\omega$ paths parameterised by the parameter $s$ as follows:

\begin{equation}
\label{GeneratedLieSubgroupTheoremProof_3}
%\begin{array}{lcl}
\sigma_s:[-1,\,1]\to\mathfrak{G};\quad\sigma_s(\tau) = \sigma_X(s)\,\sigma_Y(\tau)\,\sigma_X(-s)
%\\ &=& \exp\left(\exp\left(\mathrm{ad}(X)\,s + \sum\limits_{j=1}^N p_j(s)\,\mathrm{ad}(\hat{X}_j)\right) \left(\tau Y + \sum\limits_{j=1}^N q_j(\tau)\,\hat{X}_j\right)\right)\end{array}
\end{equation}

\noindent so that the tangent to a family member at the identity as a function of the parameter $s$ is, by elementary calculation:

\begin{equation}\label{GeneratedLieSubgroupTheoremProof_4}
\begin{array}{lcl}T(s) &=& \exp\left(\mathrm{ad}(X)\,s + \sum\limits_{j=1}^N p_j(s)\,\mathrm{ad}(\hat{X}_j)\right)\, Y \\ &=& (1+\tilde{p}_0(s))\,Y + (s+\tilde{p}_1(s))\,\mathrm{ad}(X)\,Y + \sum\limits_{j=2}^{N-1} \tilde{p}_j(s)\,\hat{Z}_j\end{array}
\end{equation}

\noindent where we have absorbed all the higher powers $s^2 \mathrm{ad}(X)^2\,Y/2!,\,s^3 \mathrm{ad}(X)^3\,Y/3!,\,\cdots$ into the sum $\sum\limits_{j=2}^{N-1} \tilde{p}_j(s)\,\hat{Z}_j$ with modified analytic co-efficient functions $\tilde{p}_j(s)$. Here the $\hat{Z}_j\in\mathfrak{g}$ are vectors which are mutually linearly independent and also linearly independent from either $Y$, $[X,\,Y]$ or any linear combination of these last two. We have allowed for the possibility of the vector $\mathrm{ad}(X)\,Y$ showing up in the high order powers of $s$ by adding $\tilde{p}_1(s)$ to the multplier $s$ of the term $\mathrm{ad}(X)\,Y$; likewise the $\tilde{p}_0(s)$ term absorbs any $Y$ component from high order powers. The co-efficient functions $\tilde{p}_j(s)$, including different powers of $s$, are all different and so are linearly independent on any interval. Therefore, we can choose $m$ discrete $s_j\in[-1,\,1]$ to yield $m$ linearly independent tangent vectors $T(s_j)$ where $m\leq N$ is the total number of linearly independent vectors in the set $\{Y,\,[X,\,Y],\,\mathrm{ad}(X)^2\,Y,\,\mathrm{ad}(X)^3\,Y,\,\cdots\}$. Therefore, through Gaussian elimination, we can linearly combine the tangent vectors $T(s_j)$ so that their linear combination equals the term $[X,\,Y]$. That is, we can find superpositions weights $\alpha_j$ such that:

\begin{equation}\label{GeneratedLieSubgroupTheoremProof_5}[X,\,Y] = \sum\limits_{k=1}^m\alpha_k\,T(s_k)\end{equation}

Therefore, since the Lie group's identity tangent space is a linear space, the path $\sigma(\tau) = \prod\limits_{k=1}^m\sigma_X(s_k)\,\sigma_Y(\alpha_k\,\tau)\,\sigma_X(-s_k)$ has the tangent $[X,\,Y]$ to the identity. Note that in most cases, not all the vectors spanning the Lie algebra $\mathfrak{g}$ are present in the sum in $\eqref{GeneratedLieSubgroupTheoremProof_4}$, so that the sum in $\eqref{GeneratedLieSubgroupTheoremProof_5}$ contains fewer ($m$) than $N$, terms. It may even be that none of these terms are there aside from $Y$ and $[X,\,Y]$, but the point is that these last two are guaranteed to be in $\eqref{GeneratedLieSubgroupTheoremProof_4}$ with the nonzero weights shown there, so that the procedure above can be summarised as: $T(s)$ includes $Y, \,[X,\,Y]$ and some other linear independent vectors in the Lie algebra with linearly independent co-efficient functions of $s$. Therefore, there exists a linear combination of the some $T(s_k)$ that sums to any of these linearly independent vectors, thus, in particular we can retrieve $Y$ or $[X,\,Y]$ in this way.

Given we now have a set of $C^\omega$ paths with tangents $X_j$, we can build paths with a tangent to the identity equal to any linear superpositon of the $X_j$ by the method of \cite{RossmannSecondKindCanonicalCoordinates, RossmannLieCorrespondence}, and we can build a path with a tangent to the identity equal to the Lie bracket of any pair of these paths with the procedure described above. Therefore a finite sequence of operations comprising the above procedure together with linear superposition operations by the method of \cite{RossmannSecondKindCanonicalCoordinates} can realise a $C^\omega$ path with any tangent inside some neighbourhood $\mathcal{H}$ of $\mathbf{0}$ in the smallest Lie algebra $\mathfrak{h}$ containing the $X_j$.

Having built paths whose tangents span $\mathfrak{h}$, we can, arguing as in \cite{RossmannSecondKindCanonicalCoordinates},realise any member of $\exp(\mathcal{H})$ on a finite product of these spanning paths. From there we can see that a finite product of such paths can realise any member of the connected Lie group $\mathfrak{H} = \bigcup\limits_{k=1}^\infty\exp(\mathcal{H})^k = \bigcup\limits_{k=1}^\infty \exp(\mathfrak{h})^k$. \qedhere

\end{proof}

Sifting carefully through this proof, one can specialize it to the paths $\sigma_j(\tau)=\exp(X_j)$ and at the same time reword it to make it almost explicitly constructive.

\begin{algorithm}
\label{LieAlgebraBasisFindingAlgorithm}

Let $\mathfrak{G}$, $N$, $\mathfrak{g}$ and $\mathcal{N}_\mathrm{id}$ be as in Theorem \ref{GeneratedLieSubgroupTheorem} and now specialise $\{\sigma_j:[-1,\,1]\to\mathcal{N}_\mathrm{id};\;\sigma_j(\tau)=\exp(\tau\,\hat{X}_j)\}_{j=1}^M$ where $\{\hat{X}_j\}_{j=1}^M$ are linearly independent and let $\mathfrak{h}$ bethe smallest Lie algebra containing the $\hat{X}_j$; aside from the specialisation to $\exp$, {\it all} definitions are wholly analogous to Theorem \ref{GeneratedLieSubgroupTheorem}. Then the following algorithm terminates, and it does so precisely when it has found a basis for the whole of $\mathfrak{h}$. 

\begin{equation*}
\begin{array}{l}
\mathfrak{o} := \{\hat{X}_j\}_{j=1}^M;\quad\text{lengthO} := M;\quad\text{done} := \text{False};\quad \mathcal{S}=\emptyset;\\
\text{\bf while not }\text{done }\text{\bf do}\;\{\\
\quad \text{\bf for each }j\in\{1,\cdots\,\text{lengthO}\}\,\text{\bf do}\;\{\\
\quad\quad\text{\bf for each }k\in\{j,\,\cdots,\,\text{lengthO}\}\,\text{\bf do}\;\{\\
\begin{array}{lcl}
\quad\quad\quad X&:=&\mathfrak{o}_j;\quad Y:=\mathfrak{o}_k;\\
\quad\quad\quad\mathcal{U}&:=&\text{maximal linearly independent set from }Y,\,\mathrm{ad}(X)\,Y,\,\mathrm{ad}(X)^2\,Y,\cdots;\\
\quad\quad\quad n&:=&\mathrm{length}(\mathcal{U});\\
\quad\quad\quad \mathcal{S}&:=&\{s_1,\,s_2,\,\cdots,\,s_n\}\subset\mathbb{R}\,\ni\,e^{s_i\,\mathrm{ad}(X)}\,Y\text{ are linearly independent};\\
\end{array}\\
\quad\quad \}\\
\quad\quad \mathcal{A} := \{e^{s_i\,\mathrm{ad}(X)}\,Y|\;s_i\in\mathcal{S}\};\\
\quad\quad \text{Cull any member of set } \mathcal{A}\text{ which is linearly dependent with set }\mathfrak{o};\\
\quad\quad \text{Append culled }\mathcal{A}\text{ to set }\mathfrak{o};\\
\quad \}\\
\quad \text{\bf if }\text{length}(\mathfrak{o}) =\text{lengthO \,\text{\bf then }done := \text{True};\,\text{\bf else }\text{lengthO} := length}(\mathfrak{o});\\
\}\\
\text{\bf return }\mathfrak{o};
\end{array}
\end{equation*}

\noindent Here the set of vectors being built is the variable $\mathfrak{o}$.

\end{algorithm}

\begin{proof}[Proof that Algorithm \ref{LieAlgebraBasisFindingAlgorithm} terminates and builds the whole of $\mathfrak{h}$]
\label{LieAlgebraBasisFindingAlgorithmProof} The only not-explicitly-constructive part of this algorithm is the finding of \linebreak$\mathcal{S}:=\{s_1,\,s_2,\,\cdots,\,s_n\}\subset\mathbb{R}\,\ni\,e^{s_i\,\mathrm{ad}(X)}\,Y$ are linearly independent inside the innermost for loop. But since $\exp(s\,\mathrm{ad}(X)\,Y)$ is a linear combination of members of:

\begin{equation}
\label{LieAlgebraBasisFindingAlgorithm_1}
\mathcal{U}=\bigcup\limits_{k=0}^N\{\mathrm{ad}(X)^k\,Y\}=\{Y,\,\mathrm{ad}(X)\,Y,\,\mathrm{ad}(X)^2\,Y,\,\mathrm{ad}(X)^3\,Y,\,\cdots\}
\end{equation}

\noindent the proof of Theorem \ref{GeneratedLieSubgroupTheoremProof} straight after \eqref{GeneratedLieSubgroupTheoremProof_5} shows that {\bf ({\it i})} we can find such a set by Gaussian elimination and {\bf ({\it ii})} that vector space spanned by this set is precisely the same as the vector space spanned by $\mathcal{U}$. So by dint of this existence, we can assume that we have some algorithm that implements this one step.

Now witness that the inner for loops add only vectors linearly independent from each other and linearly independent from all members of the set $\mathfrak{o}$. Since there are at most only finitely many, $N$ of these vectors, the algorithm clearly terminates. Hence $\mathrm{span}(\mathfrak{o})\subseteq\mathfrak{h}$ at all times.

Now consider the $\text{while}$ loop straight before its very last pass. In this pass, as just shown above, it calculates sets of vectors that span every set of the form $\bigcup\limits_{k=0}^N\{\mathrm{ad}(X)^k\,Y\}$ for every pair $X\,Y\in\mathfrak{o}$; in particular is calculates a vector space containing $[X,\,Y]$ for every pair $X,\,Y\in\mathfrak{o}$. Since the algorithm terminates, this last pass cannot add new vectors to $\mathfrak{o}$. Hence $\mathfrak{h}\subseteq\mathrm{span}(\mathfrak{o})$ when the algorithm terminates.\qedhere
\end{proof}

I have shown the existence of appropriate $\mathcal{S}$ at each step within the innermost for loop, so for the purposes of this paper I shall simply assume that we have some ``black box" that can reckon appropriate $s_j$ for us. Simple numerical conjugate gradient optimisation will be used in \S\ref{StateControl}.  The possible choices are highly non-unique: almost any randomly chosen different real numbers will work. However, the choice of the $s_j$ does bear strongly on ``how linearly independent'' the calculated basis for $\mathfrak{h}$ turns out to be, {\it i.e.} it bears on the condition number of the array of linearly independent vectors. This condition number in turn will bear on how many basic magnetic field pulse sequences will be needed in \S\ref{StateControl} and with what pulse times. Poor condition numbers mean long and complicated pulse sequences comprising long lingering pulses, and therefore this means that in practice the pulse sequence is likely to be highly sensitive to errors in the experimentally applied magnetic fields or pulse times. In some cases, one may be able to use the Cartan decomposition to find a generalised ``polar'' decomposition for $SU(4)$ thus achieving a better or minimal sequence of evolution operations as done in \cite{Khaneja}, but this last method's applicability depends on the Cartan pair's Lie subalgebra $\mathfrak{k}\subseteq \mathfrak{g}$ and its orthogonal (with respect to the Killing form on $\mathfrak{g}$) complement  $\mathfrak{k}^\perp$ being directly available, as operators from $\exp(\mathfrak{k})$ and $\exp(\mathfrak{k}^\perp)$ to the experimenter as time evolution operators; in general, however, at least one of the available vectors of the form found in Algorithm \ref{LieAlgebraBasisFindingAlgorithm} will ``straddle'' the Cartan pair (have nonzero components in both $\mathfrak{k}$ and $\mathfrak{k}^\perp$). The finding and optimisation of a systematic algorithm, grounded on the Cartan decomposition or otherwise, for calculating $\mathcal{S}$ is therefore a high priority for future work on the procedure of \S\ref{StateControl}.

Once we  have constructed a full basis $\{\hat{Z}_j\}_{j=1}^{\dim\,\mathfrak{h}}$ for $\mathfrak{h}$ with Algorithm \ref{LieAlgebraBasisFindingAlgorithm}, and given any $\gamma\in\mathfrak{H}=\left<\exp(\mathfrak{h})\right>$ within the connected Lie group $\mathfrak{H}=\left<\exp(\mathfrak{h})\right>=\bigcup\limits_{k=1}^\infty\,\exp(\mathfrak{h})^k$ can then, by Corollary \ref{GeneratedLieSubgroupTheoremCorollary}, find real numbers $\tau_{j,\,k}$ so that $\gamma=\prod\limits_{j=1}^{j_0}\,\left(\prod\limits_{k=1}^{\dim\,\mathfrak{h}}\,\exp\left(\tau_{j\,k}\,\hat{Z}_k\right)\right)$ for some finite $j_0$. There are two more steps to this construction. Firstly, we need to decompose $\gamma$ as $\gamma = \prod_{j=1}^{j_1}\,e^{Y_j}$ for $Y_j\in\mathfrak{h}$. If the Lie group $\mathfrak{H}$ is compact, for example, then it is the exponential of its own Lie algebra we can {\it always} find a (nonunique) $Y\in\mathfrak{h}$ such that $\gamma=e^Y$. But a general, connected Lie group ({\it e.g.} $SL(2,\,\mathbb{C})$ or $SL(2,\,\mathbb{R})$) is strictly bigger than simply the exponential of its Lie algebra: there are some elements which are not exponentials of the algebra. In \S\ref{StateControl} we deal with the compact group $SU(4)$, so this first step is not needed and will not be further considered in this paper, but in general a decomposition algorithm must be found. For example, in a connected, semisimple Lie group, the Iwasawa decomposition\cite{Knapp} will achieve this goal.

Thus I now assume that we must realise $\gamma = e^Y$ where $Y\in\mathfrak{h}$. Since one can write $\gamma = e^Y = \exp\left(\frac{Y}{n}\right)^n$ for any $n\in\mathbb{N}$, there is an $n$ big enough that $\exp\left(\frac{Y}{n}\right)\in\mathcal{K}$, where $\mathcal{K}$ is any arbitrarily small nucleus. We choose the latter's size right and then the Wei-Norman equations\cite{WeiNorman} can be used to calculate the canonical co-ordinates as follows:

\begin{lemma}[Wei-Norman]
\label{WieNormanLemma}
Let $\mathfrak{G}$ be an $N$ dimensional connected Lie group, $\mathfrak{g}$ its Lie algebra,  $\{\hat{X}_j\}_{j=1}^N\subset\mathfrak{g}$ vectors spanning $\mathfrak{g}$ and $\mathcal{K}\subset\mathfrak{G}$ a nucleus small enough to be labelled by the canonical co-ordinates of the second kind, {\it i.e.} $\mathcal{K}\subseteq \left\{\prod\limits_{j=1}^N\,e^{\tau_j\,\hat{X}_j}|\;\tau_j\in [-1,\,1]\right\}$. Let $\gamma=e^X\in\mathcal{K}$, then the canonical co-ordinates $\{\tau_j\}_{j=1}^N$ of the second kind for $\gamma=e^X$ where $X = x_1\,\hat{X}_1+x_2\,\hat{X}_2+\cdots$ are calculated by solving the $N$-dimensional nonlinear differential equation:

\begin{equation}
\label{WieNormanLemma_1}
\begin{array}{l}
\mathrm{d}_t\boldsymbol{\tau}=\mathbf{M}(\tau_1,\,\tau_2,\,\cdots)^{-1} \left(\begin{array}{c}x_1\\x_2\\x_3\\\vdots\end{array}\right)\\
\mathbf{M}(\tau_1,\,\tau_2,\,\cdots)=\\\\\quad\left(\begin{array}{c}1\\0\\0\\\vdots\end{array}\,\left(e^{\tau_1\,\mathrm{ad}(\hat{X}_1)}\right)_2\,\left(e^{\tau_1\,\mathrm{ad}(\hat{X}_1)}\,e^{\tau_2\,\mathrm{ad}(\hat{X}_2)}\right)_3\,\cdots\,\left(e^{\tau_1\,\mathrm{ad}(\hat{X}_1)}\,\cdots\,e^{\tau_{N-1}\,\mathrm{ad}(\hat{X}_{N-1})}\right)_N\right)
\end{array}
\end{equation}

\noindent subject to the initial conditions $\boldsymbol{\tau}=\mathbf{0}$ at $t=0$. The required canonical co-ordinates are the values in the vector $\boldsymbol{\tau}$ at $t=1$, {\it i.e.} $X=\prod\limits_{j=1}^N\,e^{\tau_j(1)\,\hat{X}_j}$.

\end{lemma}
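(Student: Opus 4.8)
The plan is to realise $\gamma=e^X$ as the endpoint at $t=1$ of the one-parameter path $g(t)=e^{tX}$, which is the unique solution of the right-invariant flow $\mathrm{d}_t g = X\,g$, $g(0)=\mathrm{id}$, and then to re-express this same flow in the coordinates of the second kind. First I would substitute the ansatz $g(t)=\prod_{j=1}^N e^{\tau_j(t)\,\hat{X}_j}$, with $\tau_j(0)=0$, and differentiate. Using the product rule together with $\mathrm{Ad}(e^{A})=e^{\mathrm{ad}(A)}$ to carry each factor $\hat{X}_k$ past the factors to its left, one finds
\begin{equation*}
\mathrm{d}_t g\; g^{-1}=\sum_{k=1}^N \dot\tau_k\; e^{\tau_1\,\mathrm{ad}(\hat{X}_1)}\cdots e^{\tau_{k-1}\,\mathrm{ad}(\hat{X}_{k-1})}\,\hat{X}_k .
\end{equation*}
Setting this equal to the constant $X=\sum_j x_j\,\hat{X}_j$ and reading off components in the basis $\{\hat{X}_j\}_{j=1}^N$ yields exactly the linear system $\mathbf{M}(\boldsymbol\tau)\,\mathrm{d}_t\boldsymbol\tau=(x_1,x_2,\dots)^T$, where the $k$-th column of $\mathbf{M}$ is the coordinate vector of $e^{\tau_1\,\mathrm{ad}(\hat{X}_1)}\cdots e^{\tau_{k-1}\,\mathrm{ad}(\hat{X}_{k-1})}\,\hat{X}_k$; in particular the first column ($k=1$, empty product) is $(1,0,0,\dots)^T$, matching the statement.

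Next I would check that $\mathbf{M}$ can be inverted along the path. At $\boldsymbol\tau=\mathbf{0}$ every exponential of $\mathrm{ad}$ is the identity, so $\mathbf{M}(\mathbf{0})$ is the $N\times N$ identity matrix; more structurally, $\mathbf{M}(\boldsymbol\tau)$ is, after right-translating back to $\mathrm{id}$ and writing everything in the chosen basis, precisely the Jacobian of the second-kind coordinate map $\Phi:(\tau_1,\dots,\tau_N)\mapsto\prod_j e^{\tau_j\,\hat{X}_j}$, which by hypothesis is a diffeomorphism onto $\mathcal{K}$. Hence $\mathbf{M}(\boldsymbol\tau)$ is invertible for all $\boldsymbol\tau$ in the chart domain, and $\mathrm{d}_t\boldsymbol\tau=\mathbf{M}(\boldsymbol\tau)^{-1}\,(x_1,x_2,\dots)^T$ is a genuine (locally Lipschitz, indeed analytic) ODE there. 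Picard--Lindel\"of then gives a unique solution with $\boldsymbol\tau(0)=\mathbf{0}$ on some interval; because $\gamma=e^X\in\mathcal{K}$ and $\mathcal{K}$ is taken small enough that the whole geodesic segment $\{e^{tX}:t\in[0,1]\}$ lies in $\mathcal{K}$ (an assumption one can always arrange by replacing $X$ with $X/n$ as noted just before the lemma), the solution does not escape the chart and extends to all of $[0,1]$.

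Finally I would close the loop: the curve $h(t)=\prod_j e^{\tau_j(t)\,\hat{X}_j}$ built from this solution satisfies $\mathrm{d}_t h\;h^{-1}=X$ with $h(0)=\mathrm{id}$ by construction, which is the very same right-invariant initial value problem solved uniquely by $g(t)=e^{tX}$; therefore $h(t)=e^{tX}$ for every $t\in[0,1]$, and at $t=1$ this reads $\gamma=e^X=\prod_{j=1}^N e^{\tau_j(1)\,\hat{X}_j}$, so the components of $\boldsymbol\tau(1)$ are the sought canonical coordinates of the second kind. The main obstacle is not the differentiation, which is routine, but the two places where the smallness of $\mathcal{K}$ must be used honestly --- guaranteeing that $\mathbf{M}$ stays invertible so the ODE stays regular, and guaranteeing the solution does not leave the coordinate patch before $t=1$ --- both of which reduce to the stated hypothesis that $\mathcal{K}$ lies inside the image of the second-kind coordinate cube, together with the elementary fact that a subarc through $\mathrm{id}$ of a one-parameter subgroup can be kept inside any prescribed nucleus by shrinking $X$.
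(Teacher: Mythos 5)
Your proposal is correct and follows essentially the same route as the paper: the Wei--Norman ``shuffle'' step (conjugating each $\hat{X}_k$ past the exponentials on its left via $\mathrm{Ad}(e^A)=e^{\mathrm{ad}(A)}$), the identification of $\mathbf{M}$ as the resulting coefficient matrix whose $k$-th column is $e^{\tau_1\,\mathrm{ad}(\hat{X}_1)}\cdots e^{\tau_{k-1}\,\mathrm{ad}(\hat{X}_{k-1})}\,\hat{X}_k$ in the chosen basis, the observation that $\mathbf{M}(\mathbf{0})=\mathrm{id}$ so Picard--Lindel\"of applies near the origin, and the ``replace $X$ by $X/n$'' reduction to keep the path inside the chart. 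Your closing uniqueness argument (matching the ansatz curve against the right-invariant flow $\mathrm{d}_t g=X g$) is a slightly more explicit version of what the paper leaves implicit after cancelling $e^{tX}$, but it is the same idea, not a different method.
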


\begin{proof}
\label{WieNormanLemmaProof}
See \cite{WeiNorman} for details. The essential trick is to break the derivative \linebreak$\mathrm{d}_t\,e^{\tau_1(t)\,\hat{X}_1}\,e^{\tau_2(t)\,\hat{X}_2}\cdots$ up into its summands by the product rule, and then to rewrite each term of the form $e^{\tau_1(t)\,\hat{X}_1}\,e^{\tau_2(t)\,\hat{X}_2}\,\cdots \frac{\mathrm{d}}{\mathrm{d}\,t}\left(e^{\tau_m(t)\,\hat{X}_m}\right)\,\cdots e^{\tau_N(t)\,\hat{X}_N}$ as \linebreak$e^{\tau_1\,\mathrm{ad}(\hat{X}_1)}\,e^{\tau_2\,\mathrm{ad}(\hat{X}_2)}\cdots e^{\tau_2\,\mathrm{ad}(\hat{X}_{m-1})} \frac{\mathrm{d}}{\mathrm{d}\,t} \tau_m(t)\,\hat{X}_m\,e^{\tau_1(t)\,\hat{X}_1}\,e^{\tau_2(t)\,\hat{X}_2}\cdots e^{\tau_2(t)\,\hat{X}_N}$ by the Wei-Norman``shuffle'' trick. One then uses this to write the differential equation describing the path $[-1,\,1]\to \mathcal{K};\; t\mapsto e^{t\,X}$ in the form \linebreak$X\,e^{t\,X} = \left(\dot{\tau}_1\,X_1 +\dot{\tau}_2\, e^{\tau_1\,\mathrm{ad}(\hat{X}_1)}\,X_2 + \dot{\tau}_2\, e^{\tau_1\,\mathrm{ad}(\hat{X}_1)}\,e^{\tau_2\,\mathrm{ad}(\hat{X}_2)}\,X_3\,\cdots\right)e^{t\,X}$, cancels the $e^{t\,X}$ terms and inverts the matrix to get \eqref{WieNormanLemma_1}. The matrix to be inverted in \eqref{WieNormanLemma_1} is such that its $j^{th}$ column is the $j^{th}$ column of the square matrix \linebreak$e^{\tau_1\,\mathrm{ad}(\hat{X}_1)}\,e^{\tau_2\,\mathrm{ad}(\hat{X}_2)}\cdots\,e^{\tau_{j-1}\,\mathrm{ad}(\hat{X}_{j-1})}$ (when the $e^{\tau_k\,\mathrm{ad}(\hat{X}_k)}$ are written as matrices with respect to the basis $\{\hat{X}_j\}_{j=1}^N$). The matrix to be inverted is the identity matrix at $\boldsymbol{\tau}=\mathbf{0}$ and its determinant is a continuous (indeed analytic) function of the $\tau_j$, therefore there is a nonzero interval $-\epsilon<t<+\epsilon$ for $\epsilon>0$ wherein the matrix is invertible and so \eqref{WieNormanLemma_1} is a Cauchy initial value problem fulfilling the conditions of the Picard-Lindel{\"o}f theorem, {\it i.e.} that the vector function on the right of \eqref{WieNormanLemma_1} is Lipschitz continuous function of the $\tau_j$. Therefore, as long as $X\in\mathfrak{g}$ is near enough to $\mathbf{0}$, the unique, existing solution of the Cauchy initial value problem defined by \eqref{WieNormanLemma_1} must define the co-ordinates for $X$ at $t=1$.\qedhere

\end{proof}

\noindent and this lemma yields the algorithm to realise any $\gamma\in\mathfrak{G}$:

\begin{algorithm}
\label{CanonicalCoordinatesFindingAlgorithm}

Let $\mathfrak{G}$, $N$ and $\mathfrak{g}$ be as in Theorem \ref{GeneratedLieSubgroupTheorem}, $\mathfrak{g}$ be spanned by independent vectors $\{\hat{X}_j\}_{j=1}^N$  and suppose further that $\mathfrak{G}=\exp(\mathfrak{g})$, {\it i.e.} that every group member can be written as an exponential of a Lie algebra member. Let $\gamma=e^X;\,X\in\mathfrak{g}$ be an arbitrary member of $\mathfrak{G}$. Then the following algorithm:

\begin{equation*}
\begin{array}{l}
n:=1;\quad  \det := 1;\quad t:=0;\quad\boldsymbol{\tau}=\mathbf{0};\\
Y:=X;\\
\text{\bf for } t\in[0,\,1]\, \text{\bf do } \{\\
\quad \text{Calculate }\boldsymbol{\tau}(t)\text{ by Wei-Norman equations for } Y;\\
\quad \det=\det(\mathbf{W}^{-1});\\
\quad \text{\bf if }\det<\epsilon\, {\bf break};\\
\}\\
\text{\bf if }\det \geq\epsilon\,\text{\bf then }{\bf return}\;\boldsymbol{\tau},\;1;\; \text{\bf else }\\
\{\\
\quad n:=\mathrm{Floor}\left(\frac{1}{t}\right);\\
\quad Y:=\frac{X}{n};\\
\}\\
\text{\bf for } t\in[0,\,1]\, \text{\bf do } \{\\
\quad \text{Calculate }\boldsymbol{\tau}(t)\text{ by Wei-Norman equations for } Y;\\
\}\\
\text{\bf return}\;\boldsymbol{\tau},\;n
\end{array}
\end{equation*}

\noindent terminates and outputs the canonical co-ordinates for $\gamma_n=\exp\left(\frac{X}{n}\right)$ in $\boldsymbol{\tau}$ and the power $n$ that one must raise $\gamma_n$ to to realise the original $\gamma=e^X$. 

\end{algorithm}

\begin{proof}[Proof of efficacy and convergence of Algorithm] Clear from Lemma \ref{WieNormanLemma}.\end{proof}

The for loop is shown over a continuous iterator variable $t$, so that, as written, the loop must run $\aleph_1$ times! In practice, the loop is a discrete numerical solving of the differential equation and it breaks, unsuccessfully, if the determinant of the matrix drops below a critical threshold value $\epsilon$.  Since the Wei Norman equations will succeed (barring some numerical difficulty) in finding co-ordinates for $\gamma_n$ near enough to the identity, either the first for loop (with $n=1$) will return the correct co-ordinates or the second for loop will do so for $\frac{X}{n}$ with $n$ now calculated to be high enough to bring $\exp\left(\frac{X}{n}\right)$ near enough to the identity to belong to the nucleus $\mathcal{K}$ of the statement of Lemma \ref{WieNormanLemma}.

This ends our Lie theoretical background discussion and I now turn to the synthesis of time evolution operators in $SU(4)$ for coupled two qubit quantum systems.

\section{Coupled Spin State Manipulation}
\label{StateControl}

To further analyse the control that the experimenter wields in applying sequences of magnetic field pulses to the coupled spin system of \S\ref{Introduction}, I define the following $4\times4$ traceless, skew-Hermitian matrices:

\begin{equation}
\label{FullLieAlgebraDefinition}
\begin{array}{l}
\begin{array}{lll}
\hat{X} = \hat{X}_0+\kappa\,\hat{K}&\hat{Y} = \hat{Y}_0+\kappa\,\hat{K}&\hat{Z} = \hat{Z}_0+\kappa\,\hat{K}\end{array}\\
\begin{array}{ll}
\hat{K}_X = \frac{i}{2}\left(-\mathbf{j}\otimes\mathbf{k}+\mathbf{k}\otimes\mathbf{j}\right)&\hat{K}_Y = \frac{i}{2}\left(\mathbf{k}\otimes\mathbf{i}-\mathbf{i}\otimes\mathbf{k}\right)\\
\hat{K}_Z = \frac{i}{2}\left(-\mathbf{i}\otimes\mathbf{j}+\mathbf{j}\otimes\mathbf{i}\right)&
\hat{K}_{XX} =  -\frac{i}{2}\left(\mathbf{j}\otimes\mathbf{j}+\mathbf{k}\otimes\mathbf{k}\right)\\
\hat{K}_{YY} =  -\frac{i}{2}\left(\mathbf{i}\otimes\mathbf{i}+\mathbf{k}\otimes\mathbf{k}\right)&
\hat{K}_{ZZ} =  -\frac{i}{2}\left(\mathbf{i}\otimes\mathbf{i}+\mathbf{j}\otimes\mathbf{j}\right)\\
\hat{L}_X=-\frac{1}{4}\left(\mathbf{i}\otimes\mathbf{1}+\mathbf{1}\otimes\mathbf{i}\right)&\hat{L}_Y=\frac{1}{4}\left(\mathbf{j}\otimes\mathbf{1}+\mathbf{1}\otimes\mathbf{j}\right)\\\hat{L}_Z=-\frac{1}{4}\left(\mathbf{k}\otimes\mathbf{1}+\mathbf{1}\otimes\mathbf{k}\right)&
\end{array}\\
\hat{K}_{XY}=-\frac{i}{2\left(\gamma_n+\gamma_e\right)}\,\left(\gamma_n\,\mathbf{i}\otimes\mathbf{j}+\gamma_e\,\mathbf{j}\otimes\mathbf{i}\right)-\frac{\kappa}{2}\left(\mathbf{i}\otimes\mathbf{1}-\mathbf{1}\otimes\mathbf{i}\right)\\
\hat{K}_{YZ}=-\frac{i}{2\left(\gamma_n+\gamma_e\right)}\,\left(\gamma_n\,\mathbf{j}\otimes\mathbf{k}+\gamma_e\,\mathbf{k}\otimes\mathbf{j}\right)-\frac{\kappa}{2}\left(-\mathbf{j}\otimes\mathbf{1}+\mathbf{1}\otimes\mathbf{j}\right)\\\hat{K}_{ZZ}=\frac{i}{2\left(\gamma_n+\gamma_e\right)}\,\left(\gamma_n\,\mathbf{k}\otimes\mathbf{i}+\gamma_e\,\mathbf{i}\otimes\mathbf{k}\right)-\frac{\kappa}{2}\left(\mathbf{k}\otimes\mathbf{1}-\mathbf{1}\otimes\mathbf{k}\right)\\
\begin{array}{ll}\hat{X}_{KK}=\frac{1}{2}\left(\mathbf{i}\otimes\mathbf{1}-\mathbf{1}\otimes\mathbf{i}\right)&\hat{Y}_{KK}=\frac{1}{2}\left(-\mathbf{j}\otimes\mathbf{1}+\mathbf{1}\otimes\mathbf{j}\right)\\\hat{Z}_{KK}=\frac{1}{2}\left(\mathbf{k}\otimes\mathbf{1}-\mathbf{1}\otimes\mathbf{k}\right)&
\end{array}
\end{array}
\end{equation}

\begin{lemma}
\label{FullLieAlgebraLemma}Given the definitions of \eqref{QuaternionDefinitions} and \eqref{FullLieAlgebraDefinition} and the fixed, strictly positive real number $\kappa$ we have:

\noindent {\bf(a)} The smallest Lie algebra containing  $\{B_x\,\hat{X}+B_y\,\hat{Y}+B_z\,\hat{Z}\pm \hat{K}|\;B_j\in\mathbb{R}\}$ is the whole of $\mathfrak{su}(4)$ whenever at least one of the positive constants $\gamma_n,\,\gamma_e$ are nonzero;  

\noindent {\bf(b)} Suppose that at least one of the positive constants $\gamma_n,\,\gamma_e$ are nonzero and that $\gamma_n\neq \gamma_e$. Let $X(\theta,\,\phi,\,\alpha)=\sin\theta\,\cos\phi\,\hat{X}+\sin\theta\,\sin\phi\,\,\hat{Y}+\cos\theta\,\hat{Z}+\alpha\,\hat{K}$. Then the smallest Lie algebra containing  $\mathbf{b}(\theta,\,\phi)\stackrel{def}{=}\{X(\theta,\,\phi,\,\alpha)|\;\alpha\in\mathbb{R}\}$ (for fixed $\theta,\,\phi\in\mathbb{R}$) is the five dimensional Lie algebra \linebreak$\mathfrak{b}(\theta,\,\phi)=\mathrm{span}_\mathbb{R}\left(\{X_{\theta,\,\phi},\,K,\,\mathrm{ad}(X_{\theta,\,\phi})\,K,\,\mathrm{ad}(X_{\theta,\,\phi})^2\,K,\,\mathrm{ad}(K)^2\,X_{\theta,\,\phi}\}\right)$
%\begin{equation}
%\label{FullLieAlgebraLemma_1}
%\mathfrak{b}(\theta,\,\phi)=\mathrm{span}_\mathbb{R}\left(\{X_{\theta,\,\phi},\,K,\,\left[K,\,X_{\theta,\,\phi}\right],\,\left[\left[K,\,X_{\theta,\,\phi}\right],\,X_{\theta,\,\phi}\right],\,\left[\left[X_{\theta,\,\phi},\,K\right],\,K\right]\}\right)\\
%\mathfrak{b}(\theta,\,\phi)=\mathrm{span}_\mathbb{R}\left(\{X(\theta,\,\phi,0),\,K,\,\mathrm{ad}(X(\theta,\,\phi,0))\,K,\,\mathrm{ad}(X(\theta,\,\phi,0))^2\,K,\,\mathrm{ad}(K)^2\,X(\theta,\,\phi,0)\}\right)
%\mathfrak{b}(\theta,\,\phi)=\mathrm{span}_\mathbb{R}\left(\{X_{\theta,\,\phi},\,K,\,\mathrm{ad}(X_{\theta,\,\phi})\,K,\,\mathrm{ad}(X_{\theta,\,\phi})^2\,K,\,\mathrm{ad}(K)^2\,X_{\theta,\,\phi}\}\right)
%\end{equation}
(where we write $X_{\theta,\,\phi}= X(\theta,\,\phi,\,0)$) and $\exp(\mathfrak{b}(\theta,\,\phi))$ is a Lie group isomorphic either to $SU(2)\rtimes (U(1)\times \mathbb{R})$ or $SU(2)\rtimes (U(1)\times U(1))\cong SU(2)\rtimes \mathbb{T}_2$.

\noindent {\bf(c)}If $\gamma_n=\gamma_e\neq0$, then the Lie algebra $\mathfrak{b}(\theta,\,\phi)$ in statement {\bf(b)} shrinks to the four dimensional $\mathfrak{b}(\theta,\,\phi)=\mathrm{span}_\mathbb{R}\left(\{X_{\theta,\,\phi},\,K,\,\mathrm{ad}(X_{\theta,\,\phi})\,K,\,\mathrm{ad}(X_{\theta,\,\phi})^2\,K\}\right)$ and $\exp(\mathfrak{b}(\theta,\,\phi))$ is a Lie group isomorphic to $SU(2)\rtimes U(1)$;

\noindent {\bf(d)}If at least one of $\gamma_n,\,\gamma_e$ are nonzero and $\gamma_n\neq\gamma_e$, then the smallest Lie algebra containing both $\mathbf{b}(\theta_1,\,\phi_1)$ and $\mathbf{b}(\theta_2,\,\phi_2)$ for at least one of $\phi_1\neq\phi_2$, $\theta_1\neq\theta_2$ is the whole of  $\mathfrak{su}(4)$ and its exponential is the whole of $SU(4)$;

\noindent {\bf(e)}If $\gamma_n=\gamma_e\neq0$ then the smallest Lie algebra  containing both $\mathbf{b}(\theta_1,\,\phi_1)$ and $\mathbf{b}(\theta_2,\,\phi_2)$ for at least one of $\phi_1\neq\phi_2$ is no longer the whole of $\mathfrak{su}(4)$ but is instead a nine dimensional Lie algebra with $\mathfrak{su}(3)$ as an ideal and its exponential is $SU(3)\rtimes U(1)$ (naturally with $SU(3)$ a normal subgroup).

\end{lemma}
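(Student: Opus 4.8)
The plan is to exploit the rigidity that $\gamma_n=\gamma_e$ imposes on the available operators. When $\gamma_n=\gamma_e$, each of $\hat X_0,\hat Y_0,\hat Z_0$ in \eqref{QuaternionDefinitions} is (up to a sign and a factor $i$) one of the \emph{difference-of-spin} operators $D_j:=\tfrac12\big(\sigma_j\otimes\mathbf 1-\mathbf 1\otimes\sigma_j\big)$ --- equal gyromagnetic ratios being precisely what removes the relative weighting of the two tensor factors --- whereas $\hat K$ (which never depended on the $\gamma$'s) is $\tfrac i2\,\mathbf 1\otimes\mathbf 1-i\,P$, with $P=\tfrac12\big(\mathbf 1\otimes\mathbf 1+\sum_j\sigma_j\otimes\sigma_j\big)$ the swap operator on $\mathcal H_n\otimes\mathcal H_e$ ($P=+1$ on the spin triplet, $-1$ on the singlet). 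Writing $D_a:=\sum_j a_jD_j$ and $S_a:=\sum_j a_jS_j$ with $S_j:=\tfrac12(\sigma_j\otimes\mathbf 1+\mathbf 1\otimes\sigma_j)$ the total spin, the set $\mathbf b(\theta,\phi)$ is by definition the affine line $\{D_n+\beta\hat K:\beta\in\mathbb R\}$ for the direction $n$ attached to $(\theta,\phi)$; any Lie algebra containing it, being a real subspace closed under differences, contains $\hat K$ and hence $D_n$. So the object to identify is $\mathfrak v:=\langle D_{n_1},D_{n_2},\hat K\rangle_{\mathrm{Lie}}$, the smallest Lie algebra containing these three vectors, where I take $n_1\neq\pm n_2$ (equivalently $\mathbf b(\theta_1,\phi_1)\neq\mathbf b(\theta_2,\phi_2)$, which is the intended force of ``$\phi_1\neq\phi_2$''; the excluded coincidences fall under case {\bf(c)}).

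First I would pin down $\mathfrak w:=\langle D_{n_1},D_{n_2}\rangle_{\mathrm{Lie}}$. From $[D_a,D_b]=i\,S_{a\times b}$ and $[S_m,D_a]=i\,D_{m\times a}$ one sees that $\mathfrak w$ closes after a single step and equals $\mathrm{span}_{\mathbb R}\{D_a:a\in\Pi\}\oplus\mathbb R\,S_m$, where $\Pi=\mathrm{span}(n_1,n_2)$ and $m=n_1\times n_2$; it is a three–dimensional $\mathfrak{su}(2)$, and evaluating its Casimir $D_{a_1}^2+D_{a_2}^2+S_m^2$ (for an orthonormal basis $a_1,a_2$ of $\Pi$) shows it acts on $\mathcal H_n\otimes\mathcal H_e$ as the representation $\text{spin-}1\oplus\text{spin-}0$. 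Its trivial summand is the line spanned by the vector that in the $m=\hat z$ normalisation is $|T_0\rangle=\tfrac1{\sqrt2}(|{\uparrow\downarrow}\rangle+|{\downarrow\uparrow}\rangle)$, and which in general is obtained from this by a total-spin rotation $U_R$ ($R\hat z=m$); hence this vector, call it $|T_0^m\rangle$, is a triplet state. It is annihilated by all of $\mathfrak w$, in particular by $D_{n_1}$ and $D_{n_2}$, and, being a triplet state, $P|T_0^m\rangle=|T_0^m\rangle$, so $\hat K|T_0^m\rangle=-\tfrac i2|T_0^m\rangle$. Thus $|T_0^m\rangle$ is a common eigenvector of the three generators of $\mathfrak v$; the line $\mathbb C|T_0^m\rangle$ is $\mathfrak v$-invariant, and (a skew-Hermitian matrix preserving a line preserves its orthogonal complement) $\mathfrak v$ is contained in the stabiliser of that line inside $\mathfrak{su}(4)$, which is isomorphic to $\mathfrak u(3)$: nine-dimensional, with $\mathfrak{su}(3)$ as an ideal (the remaining $\mathfrak u(1)$ being its center) and with exponential the compact group $U(3)\cong SU(3)\rtimes U(1)$, $SU(3)$ normal. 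This establishes $\mathfrak v\subseteq\mathfrak u(3)$, so $\dim\mathfrak v\le 9$.

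The remaining, and I expect only genuinely delicate, step is the reverse inclusion: that $\mathfrak v$ fills out the entire line-stabiliser rather than some proper subalgebra. On the three-dimensional space $|T_0^m\rangle^{\perp}$ the subalgebra $\mathfrak w$ acts by the spin-$1$ (= adjoint) irrep, so $\mathfrak w$ is a copy of the \emph{principal} $\mathfrak{su}(2)\cong\mathfrak{so}(3)$ inside $\mathfrak{su}(3)$, which by Dynkin's classification is a \emph{maximal} subalgebra of $\mathfrak{su}(3)$. Now $[\hat K,D_{n_1}]\in\mathfrak v$ is $\propto D_{n_1}P$, it annihilates $|T_0^m\rangle$ (because $D_{n_1}$ and $\hat K$ do), hence lies in the $\mathfrak{su}(3)$ ideal, and it is \emph{not} in $\mathfrak w$ --- a fact I would certify by passing to the normalisation $n_1=\hat x,\ n_2=\hat y$ and writing the three generators as explicit $3\times3$ matrices on $|T_0^m\rangle^{\perp}=\mathrm{span}(|{\uparrow\uparrow}\rangle,|{\downarrow\downarrow}\rangle,|S\rangle)$, checking $D_{n_1}P$ is linearly independent of $\mathrm{span}\{D_a:a\in\Pi\}\cup\{S_m\}$. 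Hence $\mathfrak v\cap\mathfrak{su}(3)$ strictly contains the maximal subalgebra $\mathfrak w$, forcing $\mathfrak v\cap\mathfrak{su}(3)=\mathfrak{su}(3)$; and since $\hat K$ has a non-zero component along the center of $\mathfrak u(3)$, we get $\mathfrak v=\mathfrak{su}(3)\oplus\mathbb R=\mathfrak u(3)$, of dimension nine, whence $\exp(\mathfrak v)=U(3)\cong SU(3)\rtimes U(1)$ as claimed. A reader preferring not to invoke maximality of the principal $\mathfrak{su}(2)$ can instead, in that same $3\times3$ normalisation, simply grind out a short chain of brackets of the three explicit generators and watch it reach all three Cartan directions and all six off-diagonal directions of $\mathfrak u(3)$. (As a sanity check, with \emph{three} independent field directions the common eigenvector $|T_0^m\rangle$ disappears --- the zero-eigenspaces of the three $D_j$ have trivial common intersection --- consistently with statement {\bf(a)}, where the full $\mathfrak{su}(4)$ is recovered.)
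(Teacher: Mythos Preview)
Your argument for part \textbf{(e)} is correct and takes a genuinely different route from the paper's own proof. The paper proceeds by brute force: it lists the nine vectors $\hat K,\hat X,\hat Y,\hat K_X,\hat K_Y,\hat K_{XX},\hat K_{YY},\hat K_{XY},\hat L_Z$, verifies closure under brackets, and then exhibits explicit linear combinations $\boldsymbol{\hat\lambda}_1,\ldots,\boldsymbol{\hat\lambda}_8$ whose structure constants coincide with those of the Gell--Mann matrices, thereby identifying the eight-dimensional ideal with $\mathfrak{su}(3)$; a separate centre computation then distinguishes $SU(3)$ from $PSU(3)$. Your approach is structural rather than computational: you observe that for $\gamma_n=\gamma_e$ the field operators become spin-difference operators and $\hat K$ becomes (affine in) the swap, locate the common eigenline $\mathbb C\,|T_0^m\rangle$ to obtain the upper bound $\mathfrak v\subseteq\mathfrak u(3)$ as the line-stabiliser in $\mathfrak{su}(4)$, and then invoke maximality of the principal $\mathfrak{so}(3)\subset\mathfrak{su}(3)$ (your $\mathfrak w$ acting as spin~$1$ on the complement) together with $[\hat K,D_{n_1}]\propto PD_{n_1}\notin\mathfrak w$ to force equality.

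What each buys: the paper's approach is self-contained and delivers an explicit Gell--Mann-type basis, which is useful downstream for the control algorithms in \S\ref{StateControl}; it also treats the global question ($SU(3)$ versus $PSU(3)$) directly. Your approach explains \emph{why} the answer is $\mathfrak u(3)$ --- there is a protected ray in Hilbert space --- and replaces a page of bracket calculations by one appeal to Dynkin's classification; the global identification is then immediate since the line-stabiliser in $SU(4)$ is concretely $U(3)$. Two small remarks: your interpretation of the hypothesis as $n_1\neq\pm n_2$ is the right one (the paper's phrasing ``at least one of $\phi_1\neq\phi_2$'' is informal), and readers unfamiliar with the maximality of the principal $\mathfrak{sl}_2$ in $\mathfrak{su}(3)$ may prefer your fallback suggestion of grinding out a short bracket chain in the explicit $3\times3$ model --- which is, in spirit, exactly what the paper does in four dimensions.
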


\begin{proof}
\label{FullLieAlgebraLemmaProof}
{\bf(a)} The fifteen matrices $\hat{X}$, $\hat{Y}$, $\hat{Z}$, $\hat{K}_X$, $\hat{K}_Y $, $\hat{K}_Z $, $\hat{K}_{XX} $, $\hat{K}_{YY} $, $\hat{K}_{ZZ} $, $\hat{L}_X $, $\hat{L}_Y $, $\hat{L}_Z $, $\hat{K}_{XY} $, $\hat{K}_{YZ} $, $\hat{K}_{ZX}$ are readily shown to be unconditionally linearly independent. Indeed if they are vectorised into their 16 real-component superposition weights over the basis $\{\mathbf{u}_j\otimes \mathbf{u}_k|\;j,\,k\in 0\cdots 3\}$ and $\mathbf{u}_0=i\,\mathbf{1},\mathbf{u}_1=\mathbf{i},\,\mathbf{u}_2=\mathbf{j},\,\mathbf{u}_3=\mathbf{k}$ and built, together with the same vectorisation of $i$ times the $4\times 4$ identity, into a square matrix of 16, 16-component row-vectors, then the determinant of this array is always $-\frac{1}{16}$ as long as at least one of $\gamma_n,\,\gamma_e$ are nonzero. These fifteen are derived from the algebra members $\hat{X},\,\hat{Y},\,\hat{Z},\,\hat{K}$ by the bracket relationships: 

\begin{equation}
\label{FullLieAlgebraBrackets}
\begin{array}{ccc}
\hat{K}_X=\frac{1}{2}\left[\hat{K},\,\hat{X}\right]&\hat{X}_{KK}=\frac{1}{2}\left[\hat{K},\,\hat{K}_X\right]&\hat{K}_{XX}=\frac{1}{2}\left[\hat{K}_X,\,\hat{X}\right]+\kappa\,\hat{X}_{KK}\\
\hat{K}_Y=\frac{1}{2}\left[\hat{K},\,\hat{Y}\right]&\hat{Y}_{KK}=\frac{1}{2}\left[\hat{K},\,\hat{K}_Y\right]&\hat{K}_{YY}=\frac{1}{2}\left[\hat{K}_Y,\,\hat{Y}\right]+\kappa\,\hat{Y}_{KK}\\
\hat{K}_Z=\frac{1}{2}\left[\hat{K},\,\hat{Z}\right]&\hat{Z}_{KK}=\frac{1}{2}\left[\hat{K},\,\hat{K}_Z\right]&\hat{K}_{ZZ}=\frac{1}{2}\left[\hat{K}_Z,\,\hat{Z}\right]+\kappa\,\hat{Z}_{KK}\\
\hat{K}_{XY}=\frac{1}{2}\left[\hat{K}_X,\,\hat{Y}\right]&\hat{K}_{YZ}=\frac{1}{2}\left[\hat{K}_Y,\,\hat{Z}\right]&\hat{K}_{ZX}=\frac{1}{2}\left[\hat{K}_Z,\,\hat{X}\right]\\
\hat{L}_Z=\frac{1}{2}\left[\hat{K}_X,\,\hat{K}_Y\right]&\hat{L}_X=\frac{1}{2}\left[\hat{K}_Y,\,\hat{K}_Z\right]&\hat{L}_Y=\frac{1}{2}\left[\hat{K}_Z,\,\hat{K}_X\right]
\end{array}
\end{equation}

\noindent Thus {\bf(a)} is proven, but we note in passing that an alternative linearly independent set is as above but with $\hat{L}_X,\,\hat{L}_Y,\,\hat{L}_Z$ replaced by $\hat{X}_KK,\,\hat{Y}_{KK},\,\hat{Z}_{KK}$. However, in this case, the determinant stated is $-\frac{1}{2}\left(\frac{\gamma_e-\gamma_n}{\gamma_e+\gamma_n}\right)^3$ and vanishes when $\gamma_n=\gamma_e$. So, although we use this latter set for our realisation below (it is simpler), it cannot be used for coupled spin systems when $\gamma_e\approx\gamma_n$.

\noindent {\bf(b)} Without loss of generality, one can align the co-ordinate axes to the field to show by simple computation that {\it e.g.} $\{\hat{Z},\,\hat{K},\,\hat{K}_Z,\,\hat{K}_{ZZ},\,\hat{Z}_{KK}\}$ is the claimed smallest algebra, that $\{\hat{K}_Z,\,\hat{K}_{ZZ},\,\hat{Z}_{KK}\}$ is an ideal of this algebra and that the ideal is isomorphic to $\mathfrak{su}(2)$.  For the three separate ideals spanned by the $\hat{K}_X\,\cdots$, $\hat{K}_Y\,\cdots$ and $\hat{K}_Z\,\cdots$  matrices, it is readily shown that there are {\it two} seperate central elements for the exponentials of these ideals, so that the fundamental groups of these exponentials are the same as those of $SU(2)$, not $SO(3)$. Lastly, $[\hat{Z},\,\hat{K}]$ lies inside the ideal, so therefore the corresponding cosets commute in in the factor algebra and the quotient group must be a commutative Lie group. $e^{\hat{K}\,t}$ is, as we have seen, periodic, thus the one parameter group defined by $\hat{K}$ is $U(1)$. The one parameter group defined by $\hat{Z}$ is $U(1)$ if $\gamma_e / \gamma_n$ or $\gamma_n/\gamma_e$ is a rational number, otherwise it is the noncompact group $(\mathbb{R},\,+)$, analogous to the irrational slope one parameter subgroup on a 2-torus.

\noindent {\bf(c)} Readily proven as for {\bf (b)}. In this case, it is $\{\hat{K}_Z,\,\hat{K}_{ZZ},\,\hat{Z}\}$ is the ideal exponentiating to $SU(2)$.

\noindent {\bf (d)} Without loss of generalness, we can assume $\hat{X},\,\hat{Y}$ as the two directions: if there are two linearly independent directions, then these span a plane so that we can choose orthogonal vectors spanning the plane. Then we derive $\hat{K}_{XY},\,\hat{K}_X,\,\hat{K}_Y$ from $\hat{K},\,\hat{X},\,\hat{Y}$ as in \eqref{FullLieAlgebraBrackets} and then form:

\begin{equation}
\label{TwoGeneratorLieAlgebra}
\begin{array}{cl}
&[\hat{K}_{XY},\,X]+\kappa\,\left(\frac{\gamma_n-\gamma_e}{\gamma_n+\gamma_e}\right)^2\,[\hat{K}_x,\,\hat{K}_Y]+\frac{\kappa^2\,(\gamma_n-\gamma_e)}{\gamma_n+\gamma_e}\,\hat{K}+2\,\kappa^2\,\hat{K}_X+2\,\frac{\gamma_e\,\gamma_n}{(\gamma_e+\gamma_n)^2}\,K_Y\\=&\frac{\kappa\,(\gamma_n-\gamma_e)}{\gamma_n+\gamma_e}\,\hat{Z}
\end{array}
\end{equation}

\noindent so that, as long as $\gamma_e\neq\gamma_n$, a finite sequence of linear and bracket operations yields the vector $\hat{Z}$, whence the generation of the whole algebra proceeds exactly as in {\bf (a)}.

\noindent {\bf (e)} As can be checked by straighforward but tedious calculation, the smallest algebra generated by $\hat{K},\,\hat{X},\,\hat{Y}$ when $\gamma_e=\gamma_n$ is \linebreak$\mathrm{span}_\mathbb{R}\left(\left\{\hat{K},\,\hat{X},\,\hat{Y},\,\hat{K}_X,\,\hat{K}_Y,\hat{K}_{XX},\,\hat{K}_{YY},\,\hat{K}_{XY},\,\hat{L}_Z\right\}\right)$. The following linear combinations:

\begin{equation}
\label{GellMannBasis}
\begin{array}{ccc}
\boldsymbol{\hat{\lambda}}_1=\hat{K}_{XX}-\hat{K}_{YY}&\boldsymbol{\hat{\lambda}}_2=2(\hat{K}_{XY}-\kappa\,\hat{X}+\kappa^2\,\hat{K})&\boldsymbol{\hat{\lambda}}_3=2\,\hat{L}_Z\\\boldsymbol{\hat{\lambda}}_4=\frac{\hat{X}-\kappa\,\hat{K}+\hat{K}_Y}{\sqrt{2}}&
\hat{\lambda_5}=\frac{\hat{Y}-\kappa\,\hat{K}-\hat{K}_X}{\sqrt{2}}\\\boldsymbol{\hat{\lambda}}_6=\frac{\hat{X}-\kappa\,\hat{K}-\hat{K}_Y}{\sqrt{2}}&\boldsymbol{\hat{\lambda}}_7=-\frac{\hat{Y}-\kappa\,\hat{K}+\hat{K}_X}{\sqrt{2}}&\boldsymbol{\hat{\lambda}}_8=\frac{\hat{K}_{XX}+\hat{K}_{YY}}{\sqrt{3}}\\
\end{array}
\end{equation}

\noindent span an eight-dimensional ideal, thus exponentiating to a normal subgroup and the factor group is the one parameter group $e^{\hat{K}\,t}$, which is periodic, thus isomorphic to $U(1)$. The eight vectors in \eqref{GellMannBasis}  have the same structure constants as those of the Gell-Mann matrices (divided by $i$, since the physicist's convention is work with the algebra of Hermitian matrices and to form exponentials as $e^{i\,H}$) and thus of $\mathfrak{su}(3)$. The basis in \eqref{GellMannBasis} begets the diagonal Killing form $\mathrm{Tr}(\mathrm{ad}(\boldsymbol{\hat{\lambda}}_j)\,\mathrm{ad}(\boldsymbol{\hat{\lambda}}_k))=-12\,\delta_{j\,k}$ with $\boldsymbol{\hat{\lambda}}_1,\,\boldsymbol{\hat{\lambda}}_2,\,\boldsymbol{\hat{\lambda}}_3$ spanning a copy of $\mathfrak{su}(2)$, $\boldsymbol{\hat{\lambda}}_8$ commuting with any of these three,  $\boldsymbol{\hat{\lambda}}_1,\,\boldsymbol{\hat{\lambda}}_2,\,\boldsymbol{\hat{\lambda}}_3,\,\boldsymbol{\hat{\lambda}}_8$ spanning the Lie subalgebra in a Cartan pair of a Cartan decomposition and the other four matrices spanning the orthogonal complement in the Cartan pair. Now we must decide whether the Lie algebra spanned by \eqref{GellMannBasis} exponentiates to the whole of $SU(3)$ or a discrete projection thereof ($PSU(3)$).  For $SU(3)$, the group's three-element center $\left\{\left.\exp\left(\frac{2\,k\,\pi}{3}\,i\right)\,\mathbf{1}\right|\;k\in\mathbb{Z}\right\}$ is contained within the one parameter subgroup corresponding to $\left\{\left.\exp\left((3\,\boldsymbol{\hat{\lambda}}_3+2\,\sqrt{3}\,\boldsymbol{\hat{\lambda}}_3)\,\tau\right)\right|\;\tau\in\mathbb{R}\right\}$. For the matrices in \eqref{GellMannBasis}, it is also readily shown that there are three distinct, central elements of the form $\exp\left((3\,\boldsymbol{\hat{\lambda}}_3+2\,\sqrt{3}\,\boldsymbol{\hat{\lambda}}_3)\,\frac{2\,k\,\pi}{3}\right);\,k\in\mathbb{Z}$, so the fundamental group and Lie algebra of the exponential are the same as that of $SU(3)$ and the exponential is therefore isomorphic to $SU(3)$, not a projection ($PSU(3)$) thereof.\qedhere

\end{proof}

\noindent {\bf Practical consequences for experimenters} The practical outcomes of Lemma \ref{FullLieAlgebraLemma}, given Theorem \ref{GeneratedLieSubgroupTheorem} and Algorithm \ref{LieAlgebraBasisFindingAlgorithm}, are {\bf (i)} that an experimental kit which can impart variable magnetic fields of either sign in three, linearly independent directions to a coupled spin system can always impart any time evolution operator from $SU(4)$ to the two coupled quantum spin state,  {\bf (ii)} that a simplified experimental kit imparting magnetic fields in only two linearly independent directions can do the same (at the expense of more complicated magnetic field pulse sequences) as long as $\gamma_n\neq\gamma_e$, so that for coupled spin systems with similar or the same gyromagetic ratios for the two coupled particles, all three independent magnetic fields will be needed for good controllability of the time evolution operator.

Lastly, I now give an example of the algorithms' application for the coupled $^{31}P$ nucleus-electron system. We first need to calculate a basis for $\mathfrak{su}(4)$ with  Algorithm \ref{LieAlgebraBasisFindingAlgorithm}. To this end, let us define our ``unit'' magnetic field strength as $B_u=10{\rm mT}$ and our ``unit'' time interval as $\tau_u=100{\rm ns}$. Under these unit conditions, we define $\hat{K}_U = 5.8765\,\hat{K}$, $\hat{X}_U = B_0\,(\gamma_n+\gamma_e)\,\tau_u\,\hat{X}_0 = 27.98723\,\hat{X}_0$, $\hat{Y}_U = 27.98723\,\hat{Y}_0$ and $\hat{Z}_U=27.98723\,\hat{Z}_0$. By imparting a magnetic field (possibly none) $B_x,\,B_y,\,B_z\in\mathbb{R}$ units for a time interval $\tau$  units, we can now directly impart any time evolution operator of the form $\exp((B_x,\hat{X}_U + B_y,\hat{Y}_U+B_z\,\hat{Z}_U) + \hat{K}_U)\tau)$ for $\tau\geq 0$. As discussed in \S\ref{Introduction}, we can achieve a term of the form $-\hat{K}_U\,t$ for $t\geq 0$ as $+\hat{K}_U\,(n\,\tau_p - t)$  where $n\in\mathbb{N}$ and $\tau_p\approx 2.14$  in our $100{\rm ns}$ time units is the period of $e^{\hat{K}_U\,t}$. 

By examination of the steps of Theorem\ref{GeneratedLieSubgroupTheorem} and Algorithm \ref{LieAlgebraBasisFindingAlgorithm}, we can impart time evolution operators belonging to the one parameter subgroups $\{e^{\tau\,\hat{H}_j}|\;\tau\in\mathbb{R}\}$ corresponding to the following Lie algebra members $\hat{H}_j$:

\begin{equation}
\label{ControlAlgebra}
\begin{small}
\begin{array}{l}
\begin{array}{ll}
\hat{H}_1 = (b_1\,\hat{X}_U + \hat{K}_U)\,\tau_1&\hat{H}_2 =( b_2\,\hat{Y}_U + \hat{K}_U)\,\tau_2\\
\hat{H}_3 =( b_3\,\hat{Z}_U + \hat{K}_U)\,\tau_3&\hat{H}_4 = \tau_4\,\hat{K}_U\\\\
\hat{H}_5 =\mathrm{Ad}(\exp((b_5\,\hat{X}_U + \hat{K}_U)\,\tau_5))\,\hat{H}_4&\hat{H}_6 =\mathrm{Ad}(\exp((b_6\,\hat{X}_U + \hat{K}_U)\,\tau_6))\,\hat{H}_4\\
\hat{H}_7=\mathrm{Ad}(\exp(\hat{K}_U\,\tau_7))\,\hat{H}_1&\\\\
\hat{H}_8 =\mathrm{Ad}(\exp((b_8\,\hat{Y}_U + \hat{K}_U)\,\tau_8))\,\hat{H}_4&\hat{H}_9 =\mathrm{Ad}(\exp((b_9\,\hat{Y}_U + \hat{K}_U)\,\tau_9))\,\hat{H}_4\\
\hat{H}_{10}=\mathrm{Ad}(\exp(\hat{K}_U\,\tau_{10}))\,\hat{H}_2&\\\\
\hat{H}_{11} =\mathrm{Ad}(\exp((b_{11}\,\hat{X}_U + \hat{K}_U)\,\tau_{11}))\,\hat{H}_4&\hat{H}_{12} =\mathrm{Ad}(\exp((b_{12}\,\hat{X}_U + \hat{K}_U)\,\tau_{12}))\,\hat{H}_4
\end{array}\\\\
\hat{H}_{13} =\mathrm{Ad}(\exp((b_{13}\,\hat{X}_U + \hat{K}_U)\,\tau_{13}))\,\mathrm{Ad}(\exp((c_{13}\,\hat{Y}_U + \hat{K}_U)\,\varsigma_{13}))\,\hat{H}_4\\
\hat{H}_{14} =\mathrm{Ad}(\exp((b_{14}\,\hat{Y}_U + \hat{K}_U)\,\tau_{14}))\,\mathrm{Ad}(\exp((c_{14}\,\hat{Z}_U + \hat{K}_U)\,\varsigma_{14}))\,\hat{H}_4\\
\hat{H}_{15} =\mathrm{Ad}(\exp((b_{15}\,\hat{Y}_U + \hat{K}_U)\,\tau_{15}))\,\mathrm{Ad}(\exp((c_{15}\,\hat{Z}_U + \hat{K}_U)\,\varsigma_{15}))\,\hat{H}_4\\
\end{array}
\end{small}
\end{equation}

We can here use the second basis (with $\hat{X}_{KK},\,\hat{Y}_{KK},\,\hat{Z}_{KK}$)  discussed in the proof of Lemm \ref{FullLieAlgebraLemma}, part {\bf (a)}. We can understand why the forms in \eqref{ControlAlgebra} do indeed yield an independent basis by looking carefully at the proof of Theorem \ref{GeneratedLieSubgroupTheoremProof} and Algorithm \ref{LieAlgebraBasisFindingAlgorithm}. Firstly, $\hat{H}_1,\,\cdots,\,\hat{H}_4$ are linearly independent. The forms of $\hat{H}_5,\,\cdots,\,\hat{H}_6,\,\hat{H}_8,\,\hat{H}_9$, $\hat{H}_{11},\,\hat{H}_{12}$ are the forms in $\eqref{GeneratedLieSubgroupTheoremProof_3}$ and \eqref{GeneratedLieSubgroupTheoremProof_4} with $\sigma_X$ set to paths of the form $\exp(s\,\hat{H}_j)$ for $j\in1\cdots3$ and $Y$ set to $\sigma_Y$ set to $\exp(\tau\,\hat{K_u})$ with different values of $s$ in \eqref{GeneratedLieSubgroupTheoremProof_4}. Since we have seen that $\hat{X}_U$ and $\hat{K}_U$ generates a Lie algebra of dimension 5, there can only be two linearly independent values of the form of $\hat{H}_5$ and  $\hat{H}_6$ and one of the form $\hat{H}_7$ realised with different values of $s$in \eqref{GeneratedLieSubgroupTheoremProof_4}. Likewise for $\hat{H}_8,\,\hat{H}_{9},\,\hat{H}_{10}$. We realise two further linearly independent values  $\hat{H}_{11},\,\hat{H}_{12}$ likewise but when try to get a further independent value of the form $\mathrm{Ad}(\exp(\hat{K}_U\,\tau_7))\,\hat{H}_3$ we fail. This is because we have already included $\hat{K}_U$ in our list of linearly independent vectors, and then $\hat{Z}_{KK}$ is linearly depenent on all the others through the relationship $\hat{X}_{KK}+\hat{Y}_{KK}+\hat{Z}_{KK}=-2\,\hat{K}$. Lastly, the forms of $\hat{H}_{13},\,\hat{H}_{14},\,\hat{H}_{15}$ give us the last three linearly independent vectors because we have seen in Lemma \eqref{ControlAlgebra} that $\mathrm{ad}(\hat{X})\,\mathrm{ad}(\hat{Y})\,\hat{K}$,$\mathrm{ad}(\hat{Y})\,\mathrm{ad}(\hat{Z})\,\hat{K}$ and $\mathrm{ad}(\hat{Z})\,\mathrm{ad}(\hat{X})\,\hat{K}$ yield the last three members of the $\frak{su}(4)$ basis.

Now we must choose the numerical values of the constants in \eqref{ControlAlgebra}. This was done numerically by vectorising all the fifteen $\hat{H}_j$ (calculating their components with respect to the basis $\{\mathbf{u}_j\otimes \mathbf{u}_k|\;j,\,k\in 0\cdots 3\}$ with $\mathbf{u}_0=i\,\mathbf{1},\mathbf{u}_1=\mathbf{i},\,\mathbf{u}_2=\mathbf{j},\,\mathbf{u}_3=\mathbf{k}$), assembling these components into a $15\times 16$ real matrix and then calculating the singular values of this matrix. The condition number (ratio of maximum to minimum singular value) measures {it how} linearly independent the basis is and therefore how well it is likely to work in Algorithm \ref{CanonicalCoordinatesFindingAlgorithm}. That is, poor (large) condition numbers mean higher values of $n$ output by Algorithm \ref{CanonicalCoordinatesFindingAlgorithm} and therefore much longer preparation magnetic pulse sequences. So the constants of \eqref{ControlAlgebra} were optimised with a simple hill climbing procedure (repeatedly working through the list, optimising each constant individually) to find the constants yielding the minimum condition number for the basis. The following constants yielded a maximum singular value of $15.82$ and a minimum singular value of $1.70$, {\it i.e.} a condition number of $9.3$: $b_1 = 2.003$, $\tau_1 = 0.151$, $b_2 = 1.5155$, $\tau_2 = 0.176$, $b_3 = 1.958$, $\tau_3 = 0.118$, $\tau_4 = 1.109$, $b_5 = 0.3015$, $\tau_5 = 1.021$, $b_6 = 0.5195$, $\tau_6 = 0.910$, $\tau_7 = 0.215$, $b_8 = 0.1925$, $\tau_8 = 0.931$, $b_9 = 0.222$, $\tau_9 = 0.926$, $\tau_{10} = 0.2005$, $ b_{11} = -0.167$, $\tau_11 = 0.9825$, $b_{12} = 0.394$, $\tau_{12} = 0.9255$, $b_{13} = 0.198$, $\tau_{13} = 1.017$, $c_{13} = 0.178$, $\varsigma_{13} = 0.9855$, $b_{14} = 0.344$, $\tau_{14} = 1.000$, $c_{14} = 0.166$, $\varsigma_{14} = 0.9845$, $b_{15} = 0.257$$, $$\tau_{15} = 0.900$, $c_{15} = 0.190$, $\varsigma_{15} = 1.377$. 

Let us now use this basis to realise the unitary $SU(4)$ matrix $\boldsymbol{\gamma}=\mathbf{j}\otimes\mathbf{i}$. This has a physical meaning of the change that we should see if we could impart a magnetic field in the $y$ direction on the electron, a magnetic field of the same magnitude but in the $x$ direction on the nucleus and allow both of these spin states to precess, uncoupled and independently, a one-fourth cycle about their respective magnetic fields. Clearly there is no simple, obvious way of doing this with the fields available to the experimenter. The matrix logarithm of $\boldsymbol{\gamma}$ is $\log \boldsymbol{\gamma} = -i\,\frac{\pi}{2}\,(\mathbf{i}\otimes\mathbf{j}+\mathbf{k}\otimes\mathbf{k})$ and its approximate components in the basis of \eqref{ControlAlgebra} are $x_1=-0.287893$, $x_2=-0.41226$, $x_3=0.178931$, $x_4=-0.846392$, $x_5=0.248348$, $x_6=0.215918$, $x_7=0.163212$, $x_8=0.77681$, $x_9=0.143116$, $x_{10}=0.204211$, $x_{11}=0.19128$, $x_{12}=0.219224$, $x_{13}=0.517963$, $x_{14}=-0.363032$, $x_{15}=-0.269563$.

We now use a Runge-Kutta discretised approximation of Algorithm \ref{CanonicalCoordinatesFindingAlgorithm}. One thousand steps ($\Delta t=0.001$) is found to be extremely accurate. When we try to find the canonical co-ordinates of $\boldsymbol{\gamma}$ directly by the first for loop, the determinant of the Wei-Norman matrix vanishes at about step number 166, {\it i.e.} at about one sixth the way through the loop, signalling that roughly $\exp\left(\frac{1}{7}\,\log \boldsymbol{\gamma}\right)$ is the furthest matrix from the identity that the Wei Norman equations can find second kind canonical co-ordinates for. Therefore, the second for loop in Algorithm  \ref{CanonicalCoordinatesFindingAlgorithm}, we set $n=7$ and find the canonical co-ordinates of $\exp\left(\frac{1}{7}\,\log \boldsymbol{\gamma}\right)$. Now the Wei-Norman equations can be solved and yield the following canonical co-ordinates of the second kind:

\begin{equation}
\label{NumericalResult}
\begin{array}{l}
\{h_j\}_{j=1}^{15}\approx \begin{small}\begin{array}{ccccccc}\{&-0.0724497&-0.0480297&0.0459191&-0.0680989&0.0333669&\\&0.0469188&0.0328527&-0.125384&
0.203289&0.0477131&\\&0.019957&0.0193781&0.0793817&-0.0672283&-0.0156309&\}\end{array}\end{small}\\\\
\exp\left(\frac{1}{7}\,\log \boldsymbol{\gamma}\right) = \prod\limits_{j=1}^{15}\,e^{h_j\,\hat{H}_j}\approx \boldsymbol{\gamma}_{\frac{1}{7}}\stackrel{def}{=}\left(\begin{array}{cccc}\gamma_{1 1}&0&0&\gamma_{1 4}\\0&\gamma_{2 2}&-\gamma_{1 4}^*&0\\0&\gamma_{1 4}^*&\gamma_{2 2}&0\\-\gamma_{1 4}&0&0&\gamma_{1 1}\end{array}\right)\\\\
\gamma_{1 1}= 0.950484 +0.216942 i;\;\gamma_{1 4}=-0.216942-0.0495156 i\\\gamma_{2 2}=0.950484 -0.216942 i
\end{array}
\end{equation}

\noindent and the rms error of the matix elements $\sqrt{\frac{1}{16}\mathrm{tr}((\boldsymbol{\gamma}_{\frac{1}{7}}^7-\boldsymbol{\gamma})^\dagger\,(\boldsymbol{\gamma}_{\frac{1}{7}}^7-\boldsymbol{\gamma})}$ is found to be $6\times 10^{-12}$, where $\boldsymbol{\gamma}_{\frac{1}{7}}$ is the numerical value defined by the numerical second kind canonical co-ordinates in \eqref{NumericalResult} and $\boldsymbol{\gamma}_{\frac{1}{7}}^7$ is the numerical result of Algorithm \ref{CanonicalCoordinatesFindingAlgorithm}, {\it i.e.} the numerical realisation of our design goal $\boldsymbol{\gamma}=\mathbf{j}\otimes\mathbf{i}$. The lengths of the magnetic pulses found here are of the order of tens to hundreds of nanoseconds, and the speed of pulse manipulation is thus well below the $1{\rm GHz}$ upper frequency limits of the contemporary magnetic field control speeds cited in \cite{KuzmakTkachuk}. The peak magnetic fields strengths used here are of the order of $2{\rm mT}$ ({\it e.g.} the magnetic field in $h_1\,(b_1\,\hat{X}_U +\hat{K}_U)$, which is $2.003\times0.287893\approx0.58$, corresponds to $5.8{\rm mT}$, and the constants $h_j$ in the canonical co-ordinates are between $0.1$ and $0.3$); these are big but are quite in keeping with the peak manipulation fields available in today's technology. Lastly, one Wei-Norman cycle, {\it i.e.} preparation sequence of the form $e^{h_1\,\hat{H}_1}\,e^{h_2\,\hat{H}_2}\,\cdots e^{h_{15}\,\hat{H}_{15}}$ takes forty six magnetic pulsing/ idling stages. The whole preparation therefore takes 322 pulsing / idling  stages, each of the order of ten nanoseconds. Assuming this example to be typical, the whole preparation sequence takes of the order of 5 microseconds, well smaller than the two second coherence time cited in \cite{KuzmakTkachuk}.

\section{Conclusion}

In a general Lie group $\mathfrak{G}$ setting, we have discussed the set generated by Lie group members of the form $e^{\tau_j\,\hat{X}_j}$ where $\hat{X}_j\in \mathfrak{g}$ do not span a Lie algebra and given a semiconstructive proof of the fact that they generate, with products of a {\it finite} number of terms, the (possible non-topologically embedded) Lie subgroup generated by $\exp(\mathfrak{h})$, where $\mathfrak{h}$ is the smallest Lie algebra containing the $\hat{X}_j$. This proof becomes constructive when the Lie algebra exponentiates to the whole Lie group, as happens in a compact group, for example, and we have given algorithms and proofs of their convergence in such a case. In the case of two qubit coupled spin $\frac{1}{2}$ quantum states, these proofs show that the ability to pulse three linearly independent magnetic fields at any field strength in an interval around nought for any positive time allows the experimenter to impart any unitary transformation on the quantum state space in a finite sequence of these operations. Indeed, when the gyromagnetic ratios for the coupled spins are strongly unalike, two linearly independent magnetic field directions will suffice. A numerical version of the algorithms was run for an example transformation and found to be highly accurate. 

\bibliographystyle{siam}
\bibliography{LieTheoreticStateManipulation} 

\end{document}